\newcommand{\LINECOMMENT}[1]{\STATE \textit{//~#1}}
\newcommand{\R}{\ensuremath{\mathbb{R}}}
\DeclareMathOperator*{\argmin}{arg\,min}
\DeclareMathOperator*{\odil}{odil}
\DeclareMathOperator{\imax}{imax}
\newcommand{\bigo}{\ensuremath{\mathcal{O}}}
\tikzset{
	graph node/.style = {
		circle, 
		thick,
		rounded corners, 
		fill=lightgray, 
		draw=black,
		align=center,
		font=\normalsize,
	},
	dot/.style={
		circle, 
		fill, 
		inner sep=2pt,
	}
}
\tikzset{
	graph/plain/.style={
		draw=none, 
		fill=none, 
		text=black,
	},
}
\tikzset{
	graph edge/.style={
		draw= black,
		thick,
		every node/.style={graph/plain}
	},
}
\tikzset{
	graph/.style={
		node distance=3em,
		every node/.style={graph node, minimum size=1.5em},
		every edge/.style={graph edge},
	},
    directed/.style={%
		every edge/.append style={-Stealth},
	},%
}
\newcommand{\replacecolor}[2]
{%
	\expandafter\let\csname\string\color @#1\expandafter\endcsname
	\csname\string\color @#2\endcsname%
}
\title{Oriented Spanners}
\author{Kevin Buchin}{Technical University of Dortmund, Germany}{kevin.buchin@tu-dortmund.de}{https://orcid.org/0000-0002-3022-7877}{}
\author{Carolin Rehs}{Technical University of Dortmund, Germany}{carolin.rehs@tu-dortmund.de}{https://orcid.org/0000-0002-8788-1028}{}
\author{Joachim Gudmundsson}{University of Sydney, Australia}{joachim.gudmundsson@sydney.edu.au}{https://orcid.org/0000-0002-6778-7990}{}
\author{André {van Renssen}}{University of Sydney, Australia}{andre.vanrenssen@sydney.edu.au}{https://orcid.org/0000-0002-9294-9947}{}
\author{Antonia Kalb}{Technical University of Dortmund, Germany}{antonia.kalb@tu-dortmund.de}{https://orcid.org/0009-0009-0895-8153}{}
\author{Sampson Wong}{BARC, University of Copenhagen, Denmark}{sawo@di.ku.dk}{https://orcid.org/0000-0003-3803-3804}{}
\author{Aleksandr Popov}{Technical University of Eindhoven,\newline The Netherlands}{a.popov@tue.nl}{https://orcid.org/0000-0002-0158-1746}{Supported by the Dutch Research Council (NWO) under the project number 612.001.801.}
\authorrunning{K. Buchin, J. Gudmundsson, A. Kalb, A. Popov, C. Rehs, A. van Renssen, S. Wong}
\keywords{computational geometry, spanner, oriented graph, greedy triangulation}
\begin{document}
	\thispagestyle{empty}
	\maketitle
	
	\begin{abstract}
		Given a point set $P$ in the Euclidean plane and a parameter $t$, we define an \emph{oriented $t$-spanner} $G$ as an oriented subgraph of the complete bi-directed graph such that for every pair of points, the shortest closed walk in $G$ through those points is at most a factor $t$ longer than the shortest cycle in the complete graph on $P$. We investigate the problem of computing sparse graphs with small oriented dilation.
		
		As we can show that minimising oriented dilation for a given number of edges is NP-hard in the plane, we first consider one-dimensional point sets. While obtaining a $1$-spanner in this setting is straightforward, already for five points such a spanner has no plane embedding with the leftmost and rightmost point on the outer face. 
		This leads to restricting to oriented graphs with a one-page book embedding on the one-dimensional point set. For this case we present a dynamic program to compute the graph of minimum oriented dilation that runs in $\bigo(n^7)$ time for $n$ points, and a greedy algorithm that computes a $5$-spanner in $\bigo(n\log n)$ time.
		
		Expanding these results finally gives us a result for two-dimensional point sets: we prove that for convex point sets the greedy triangulation results in a plane oriented $t$-spanner  with $t=7.2 \cdot t_g$, where $t_g$ is an upper bound on the dilation of the greedy triangulation.
	\end{abstract}
	
	\section{Introduction}
	
	Computing geometric spanners is an extensively studied problem~\cite{Bose.2013,Narasimhan.2007}. Directed geometric spanners have also been considered~\cite{AkitayaBB22}.
Given a point set $P \subset \R^d$ and a parameter $t$, a \emph{directed $t$-spanner} $G=(P,E)$ is a subgraph of the complete bi-directed geometric graph on $P$ such that for every pair of points $p,p'$, the shortest path in $G$ is at most a factor $t$ longer than the shortest path in the complete graph, that is, $|p-p'|$. The \emph{dilation} of $G$ then is the smallest such $t$. Formally, $t=\max\bigl\{\frac{|d_G(p,p')|}{|p-p'|}\bigm| p,p'\in P\bigr\}$, where $d_G(p,p')$ denotes the \emph{shortest path from $p$ to $p'$ in $G$}.

(Directed) geometric spanners have a wide range of applications, ranging from wireless ad-hoc networks~\cite{burkhart2004does,schindelhauer2007geometric} to robot motion planning~\cite{dobson2014sparse} and the analysis of road networks~\cite{aronov2011connect,Eppstein00}. In all of these applications one might want to avoid adding the edge $(v,u)$ if the edge $(u,v)$ was included: in ad-hoc networks this may reduce interference, in motion planning it may reduce congestion and simplify collision avoidance, in road networks this corresponds to one-way roads or tracks, which may be necessary because of space limitations, and in communication networks one could require  two neighbouring devices not to exchange data by the same (bi-directional) direct connection, for example, in two-way authentication.

This motivates our study of \emph{oriented graphs} as spanners, i.e.\ directed spanners $G=(P,E)$ where  $(p,p')\in E$ implies  $(p',p)\notin E $. 
With this restriction, if the 
edge $(p,p')$ is added, the dilation in the other direction is never~$1$. Even worse, 
given a set $P$ of three points, where $p$ and $p'$ are very close to each other and $p''$ is far away from both, any oriented graph will have arbitrarily high dilation for either $(p,p')$ or $(p',p)$ (see \Cref{fig:motivation-odil}). 
Therefore, considering the dilation for an oriented graph as $t=\max\bigl\{\frac{|d_G(p,p')|}{|p-p'|}\bigm| p,p'\in P\bigr\}$ would not tell us much about the quality of the spanner.
To obtain meaningful results, we define \emph{oriented dilation}.

\begin{figure}[ht]
	\centering
	\begin{tikzpicture}[directed, scale=.4]
		\node[dot,label={-180:{$p\phantom{'}$}}] (v1) at (0,0) {};
		\node[dot, label={-180:{$p'$}}] (v2) at (0,2) {};
		\node[dot, label={0:{$p''$}}] (v3) at (15,1) {};
		\draw  (v1) edge (v2);
		\draw  (v2) edge (v3);
		\draw  (v3) edge (v1);
	\end{tikzpicture}
	\caption{If $p$ and $p'$ are very close to each other and $p''$ is far away from both, any oriented graph will have arbitrarily high (directed) dilation.}
	\label{fig:motivation-odil}
\end{figure}

A \emph{walk} is defined as  a sequence of points and edges of a graph. A walk is called \emph{closed} if it starts and ends at the same point. The \emph{length of a walk} is the sum of the lengths of its edges. A \emph{cycle} is a walk where all edges and points are distinct except the start and end point which are the same. A \emph{path} is a walk where all points and edges are distinct. 

By $C_G(p,p')$ we denote the \emph{shortest closed walk}\footnote{
	In the preliminary version of this paper~\cite{BuchinGKPRRW23}, the term \emph{shortest oriented cycle} is used. Since $C_G(p,p')$ can use edges twice, we have changed the term to \emph{shortest closed walk} (in an oriented graph $G$).} 
containing the points $p$ and $p'$ in an oriented graph $G$. The \emph{optimal cycle $\Delta(p,p')$} for $p$ and $p'$ is the shortest cycle containing $p$ and $p'$ in the complete undirected graph on $P$. Notice, $\Delta(p,p')$ is the triangle $\Delta_{pp'p''}$  with $p''=\argmin_{p^*\in P \setminus \{p, p'\}} \bigl(|p-p^*|+|p^*-p'|\bigr)$.

\begin{definition}[oriented dilation]\label{def:odilation}
	Given a point set $P$ and an oriented graph $G$ on $P$, the \emph{oriented dilation of two points} $p,p'\in P$ is defined as 
	\[
	\odil(p,p')=\frac{|C_G(p,p')|}{|\Delta(p,p')|}.
	\]
	The dilation $t$ of an oriented graph is defined as 
	$t=\max\{\odil(p,p')\mid p,p'\in P\}.$ 
\end{definition}
An oriented graph with dilation at most $t$ is called an \emph{oriented $t$-spanner}. 

We frequently contrast our results to known results on undirected geometric spanners, and refer to the known results by using the adjective \emph{undirected}.
Our new measure for oriented graphs is similar to the definition of dilation in round trip spanners~\cite{Cowen.1999,Cowen.2004} that has been considered in the setting of (non-geometric) directed graph spanners. But round trip spanners require a starting graph, and using the complete bi-directed geometric graph as input would not give meaningful results. 

In this paper, we initiate the study of oriented spanners. As is common for spanners, our general goal is to obtain \emph{sparse} spanners, i.e.\ with linear number of edges. The goal can be achieved by bounding the number of edges explicitly or by restricting to a class of sparse graphs like plane graphs~\cite{Bose.2013}. We refer to a spanner as a \emph{minimum (oriented) spanner} if it minimises $t$ under the given restriction.

It is known that computing a minimum undirected spanner with at most $n-1$ edges, i.e.\ a minimum dilation tree, is NP-hard~\cite{Giannopoulos.2010}. The corresponding question for oriented spanners asks for the \emph{minimum dilation cycle}. We prove this problem to be NP-hard in \Cref{sec:2D-hardness}.

The problem of computing the minimum undirected spanner restricted to the class of plane straight-line graphs is called the \emph{minimum dilation triangulation} problem; its hardness is still open~\cite{Eppstein00,Giannopoulos.2010}, but it is conjectured to be NP-hard~\cite{brandt-mdt-2014}. As this undirected problem can be emulated in the oriented setting by suitable vertex gadgets (\Cref{observation:minimum_dilation_triangulation}), it is unlikely that finding a \emph{minimum plane straight-line (oriented) spanner} can be done efficiently.

Therefore, in \Cref{sec:1D}, we start with one-dimensional point sets.
For such a point set with $n$ points, we can give a $1$-spanner with $3n-6$ edges. However, if we are interested in a one-dimensional result analogous to minimum plane spanners, this spanner is not suitable: it has no plane embedding with leftmost and rightmost point on the outer face.
Therefore, we restrict our attention to a graph class that is closer to the plane case for two-dimensional point sets: one-page book embeddings. 

We show how to compute a $t$-spanner which is a one-page plane book embedding for a one-dimensional point set in \Cref{sec:1D-1PPB}. We prove that with a greedy algorithm, we can always generate such a $t$-spanner with $t \leq 5$ in $\bigo(n \log n) $ time. An optimal one-page plane spanner can be computed in $\bigo(n^7)$ time.

As a one-page plane spanner is outerplanar, this particular class of graphs is also motivated by the problem of finding a  minimum plane spanner for points in convex position. Using these results, in \Cref{sec:2D-greedy}, we show that suitably orienting the greedy triangulation leads to oriented $\bigo(1)$-spanners for two-dimensional point sets in convex position. For general (non-convex) point sets, there are examples where all orientations of the greedy triangulation have large dilation. 

The greedy triangulation fulfils the $\alpha$-diamond-property~\cite{DasJ89}, and all triangulations with this property are undirected $\bigo(1)$-spanners. This raises the question whether all triangulations fulfilling this property are also oriented $\bigo(1)$-spanners for convex point sets. In \Cref{sec:other_triangulations} we answer this question negatively. 

\begin{table}[ht]
	\centering
	\caption{Overview of the results of the paper}
	\label{fig:overview}
	\begin{tabular}{c|c|c|c|c}
		\textbf{Point set } & \textbf{Spanner type} & \textbf{Dilation} & \textbf{Time complexity} & \textbf{Reference}   \\
		\hline
		$2$-dim. & sparse & minimum & NP-hard & Theorem~\ref{theo:min-dil-np-hard}\\
		$2$-dim. & complete graph & $\leq 2$ & $\mathcal{O}(n^3\log n)$ & \Cref{theo:2D-alg-complete-graph}\\
		$2$-dim. & plane & minimum & Min.~Dil.~Triangulation & Observation~\ref{observation:minimum_dilation_triangulation}\\
		$2$-dim. convex & plane & $\mathcal{O}(1)$ & $\mathcal{O}(n \log n )$ & \Cref{theo:alg-2D-greedy-convex} \\
		$1$-dim. & \textbf{--} & $1$ & $\mathcal{O}(n)$ & \Cref{theo:1D-1-spanner}\\
		$1$-dim. & $2$-page-plane & $\leq 2$ & $\mathcal{O}(n)$ & \Cref{theo:1D-2PPB-2-spanner} \\
		$1$-dim. & $1$-page-plane & minimum & $\mathcal{O}(n^7)$ & \Cref{theo:opt-alg-1PPB}\\
		$1$-dim. & $1$-page-plane & $\leq 5$ & $\mathcal{O}(n \log n)$ & \Cref{theo:alg-greedy}\\
	\end{tabular}    
\end{table}

\section{One-Dimensional Point Sets}\label{sec:1D}

We first focus on points in one dimension. We will always draw points on a horizontal line with the minimum point leftmost, and the maximum point rightmost.
We observe that in one dimension only the dilation of a linear number of candidate pairs needs to be checked.

\begin{lemma}
	\label{theo:odil-1D-i+1-i+2-i+3}
	Let $P$ be a one-dimensional point set of $n$ points. The oriented dilation $t$ of an oriented 
	graph $G$ on $P$ is
	\[ t = \max\{\odil(p_i,p_{i+1}),\odil(p_j,p_{j+2}),\odil(p_k,p_{k+3})\} \]
	with $ 1\leq i\leq n-1$, $1\leq j\leq n-2$ and $1\leq k\leq n-3$.
\end{lemma}
\begin{proof}
	Let $p_l, p_r$ be any two distinct points in $P$ such that $l<r$.
	
	As the cases $r=l+1$, $r={l+2}$ and  $r={l+3}$ are given by the Lemma's statement, the statement is proven for $r<l+4$  or $l> n-3$.
	
	For $r\geq l+4$  and $l\leq  n-3$, we will show 
	\[ \odil(p_l, p_r) \leq \max\{\odil(p_j,p_{j+2}),\odil(p_k,p_{k+3})\}\]
	with 
	$1\leq j\leq n-2$ and $1\leq k\leq n-3$. We will induct on $r-l$. 
	

	
	
	
	
	Since the union of $C_G(p_l,p_{l+2})$ and $C_G(p_{l+2},p_r)$ forms a closed walk containing $p_l$ and $p_r$, the length of their shortest closed walk is bounded by 
	\[|C_G(p_l,p_r)| \leq |C_G(p_l,p_{l+2})|+ |C_G(p_{l+2},p_r)|.\]
	Further, for  $r\geq l+4$, the length of the optimal cycle $\Delta(p_l,p_{r})$ is two times the distance between these points, i.e.\ $|{\Delta(p_l,p_{r})}|={2\cdot\left(p_r- p_l\right)}$.
	This permits 
	\begin{align*}
		\odil(p_l,p_r)~&{=}~\frac{|C_G(p_l,p_r)|}{(p_l-p_r)\cdot 2} \leq \frac{|C_G(p_l,p_{l+2})|+ |C_G(p_{l+2},p_r)|}{(p_{r}-p_{l+2}+ p_{l+2}-p_l )\cdot2}\\  
		&{\leq}~  \max\Bigg\{\frac{|C_G(p_l,p_{l+2})|}{|\Delta(p_l,p_{l+2})|}, \frac{|C_G(p_{l+2},p_r)|}{|\Delta(p_{l+2},p_r)|}\Bigg\}\\
		&\hspace{3.5mm}= \max\{\odil(p_l,p_{l+2}),\odil(p_{l+2},p_r)\}.
	\end{align*}
	Recursively, this bounds the dilation of any tuple $p_l,p_r$ with $r\geq l+4$. 
\end{proof}
		
		
		
		

This observation directly leads to an oriented $1$-spanner with $3n-6$ edges for every one-dimensional point set.

\begin{corollary}[oriented $1$-spanner]\label{theo:1D-1-spanner}
	For every one-dimensional point set $P$, $G=(P,E)$ with 
	\[E=\{(p_i,p_{i+1}),(p_{j+2},p_j),(p_{k+3},p_k)\mid 1\leq i\leq n-1, 1\leq j\leq n-2, 1\leq k\leq n-3\}\]
	is an oriented $1$-spanner on $P$ (see \Cref{fig:one-spanner-stack-triangulation}). 
\end{corollary}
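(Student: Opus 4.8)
The plan is to invoke \cref{theo:odil-1D-i+2-i+3} to reduce the verification to a linear set of candidate pairs, and then for each such pair to exhibit an explicit cycle in $G$ whose length matches that of the optimal oriented cycle. Since $|C_G(p,p')| \geq |\Delta(p,p')|$ holds by definition of $\Delta(p,p')$ as a shortest oriented cycle in the complete graph, producing a cycle of exactly the optimal length immediately forces $\odil(p,p') = 1$.

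First I would confirm that $G$ is genuinely an oriented graph. The three families of edges are $(p_i,p_{i+1})$, pointing rightward between consecutive points, and $(p_{j+2},p_j)$ and $(p_{k+3},p_k)$, both pointing leftward across a gap of at least one point. As every rightward edge joins consecutive indices while every leftward edge skips at least one index, no unordered pair $\{u,v\}$ receives edges in both directions, so $(u,v)\in E$ implies $(v,u)\notin E$. Next, by \cref{theo:odil-1D-i+2-i+3} it suffices to bound $\odil(p_i,p_{i+2})$ and $\odil(p_j,p_{j+3})$, and I would use that $|\Delta(p_k,p_m)| = 2(p_m-p_k)$ for $m\geq k+2$, as established in the proof of that lemma. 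For $p_i,p_{i+2}$, the graph contains the forward edges $p_i \to p_{i+1} \to p_{i+2}$ together with the skip-one edge $p_{i+2}\to p_i$, forming a triangle of perimeter $(p_{i+1}-p_i)+(p_{i+2}-p_{i+1})+(p_{i+2}-p_i) = 2(p_{i+2}-p_i) = |\Delta(p_i,p_{i+2})|$. For $p_j,p_{j+3}$, the graph contains the forward path $p_j\to p_{j+1}\to p_{j+2}\to p_{j+3}$ together with the skip-two edge $p_{j+3}\to p_j$, giving a $4$-cycle of length $(p_{j+3}-p_j)+(p_{j+3}-p_j) = 2(p_{j+3}-p_j) = |\Delta(p_j,p_{j+3})|$. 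In both cases the exhibited cycle attains the optimal length, so each candidate pair has oriented dilation $1$, and hence the dilation of $G$ is $1$.

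I do not expect a genuine obstacle here; the only point requiring care is matching the index ranges so that the required backward edge is actually present. The skip-one edge $p_{i+2}\to p_i$ exists precisely for $1\leq i\leq n-2$ and the skip-two edge $p_{j+3}\to p_j$ exists precisely for $1\leq j\leq n-3$, which coincide exactly with the ranges of the two pair families in \cref{theo:odil-1D-i+2-i+3}, so every needed cycle is realisable. A final tally confirms the edge count $|E| = (n-1)+(n-2)+(n-3) = 3n-6$.
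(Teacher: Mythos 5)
Your proposal is correct and follows exactly the argument the paper intends: the statement is presented as a direct corollary of \cref{theo:odil-1D-i+2-i+3}, with the implicit reasoning being precisely that the back edges $(p_{i+2},p_i)$ and $(p_{j+3},p_j)$ close cycles of length $2(p_{i+2}-p_i)$ and $2(p_{j+3}-p_j)$, matching the optimal oriented cycles for every candidate pair. Your additional checks (orientation well-definedness, index ranges, edge count $3n-6$) are careful spelling-out of details the paper leaves to the reader.
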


\begin{figure}
	\begin{center}
		\begin{tikzpicture}[graph, directed, scale= 1.5]
			\node (v1) at (-4,0) {$p_1$};
			\node (v2) at (-3,0) {$p_2$};
			\node (v3) at (-2,0) {$p_3$};
			\node (v4) at (-1,0) {$p_4$};
			\node (v5) at (0,0) {$p_5$};
			\node (v6) at (1,0) {$p_6$};
			\node (v7) at (2,0) {$p_7$};
			\node (v8) at (3,0) {$p_8$};
			\draw  (v1) edge (v2);
			\draw  (v2) edge (v3);
			\draw  (v3) edge (v4);
			\draw  (v4) edge (v5);
			\draw  (v5) edge (v6);
			\draw  (v6) edge (v7);
			\draw  (v7) edge (v8);
			\draw [bend right] (v8) edge (v6);
			\draw [bend right] (v7) edge (v5);
			\draw [bend right] (v6) edge (v4);
			\draw [bend right] (v5) edge (v3);
			\draw [bend right] (v4) edge (v2);
			\draw [bend right] (v3) edge (v1);
			\draw [bend left] (v8) edge (v5);
			\draw [bend left] (v7) edge (v4);
			\draw [bend left] (v6) edge (v3);
			\draw [bend left] (v5) edge (v2);
			\draw [bend left] (v4) edge (v1);
		\end{tikzpicture}\\[3mm]
		
		\begin{tikzpicture}[graph, directed]
			\node (v1) at (-4,-3) {$p_1$};
			\node (v2) at (-4,3) {$p_2$};
			\node (v3) at (3,-.5) {$p_3$};
			\node (v4) at (-3,-2) {$p_4$};
			\node (v5) at (-2,1) {$p_5$};
			\node (v6) at (01,-0.5) {$p_6$};
			\node (v7) at (-2,-1) {$p_7$};
			\node (v8) at (-1,0) {$p_8$};
			\draw  (v1) edge (v2);
			\draw  (v2) edge (v3);
			\draw  (v3) edge (v4);
			\draw  (v4) edge (v5);
			\draw  (v5) edge (v6);
			\draw  (v6) edge (v7);
			\draw  (v7) edge (v8);
			\draw(v8) edge (v6);
			\draw(v7) edge (v5);
			\draw(v6) edge (v4);
			\draw(v5) edge (v3);
			\draw (v4) edge (v2);
			\draw (v3) edge (v1);
			\draw (v8) edge (v5);
			\draw  (v7) edge (v4);
			\draw (v6) edge (v3);
			\draw (v5) edge (v2);
			\draw (v4) edge (v1);
		\end{tikzpicture}
	\end{center}
	\caption{One-dimensional oriented $1$-spanner and its plane embedding in the Euclidean space}
	\label{fig:one-spanner-stack-triangulation}
\end{figure}


In two dimensions, plane (straight-line) spanners are of particular interest~\cite{Bose.2013}. The natural one-dimensional analogue to plane straight-line graphs are one- and two-page book embeddings~\cite{BekosGR16.bookembeddings,ChungLR87.bookembeddings,DujmovicW04.bookembeddings,Yannakakis89.bookembeddings}.

A \emph{one-page book embedding} of a graph corresponds to an embedding of the vertices as points on a line with the edges drawn without crossings as circular arcs above the line.  In a \emph{two-page book embedding} an edge may be drawn as an arc above or below the line. In such a (one- or two-page) book embedding, for consecutive points on the line, we may draw their edge straight on the line.
We call a graph \emph{one-page plane} (respectively \emph{two-page plane}) if it has a one-page (respectively two-page) book embedding.

In particular, one-page plane graphs are outerplanar graphs and correspond to plane straight-line graphs if we embed the points on a (slightly) convex arc instead of on a line. Two-page plane graphs are a subclass of planar graphs, 
while any planar graph has a four-page book embedding~\cite{Yannakakis89.bookembeddings}.

As we argue next, the graph given in \Cref{theo:1D-1-spanner} is not two-page plane with the given one-dimensional embedding (and therefore also not one-page plane). This follows from a stronger statement:

\begin{observation}\label{theo:1D-1-spanner-not-page-plane}
The graph given in \Cref{theo:1D-1-spanner} has no plane embedding with the first and last point on the outer face. 
\end{observation}
\begin{proof}
Suppose the graph would have such an embedding. The graph has $3n-6$ edges, but no edge between the first and the last point for $n>4$. Thus, we could add this edge while maintaining planarity, which contradicts the fact that a planar graph has at most $3n-6$ edges. 
\end{proof}

Therefore, the graph given in \Cref{theo:1D-1-spanner} is not two-page plane (and thus not one-page plane). 
So, for a point set $P$ with $|P|>3$, every two-page plane graph $G=(P,E)$ contains a tuple $p_i,p_{i+2}\in P$ where $(p_{i+2}, p_i)\notin E$ or $p_i,p_{i+3}\in P$  where $(p_{i+3},p_i)\notin E$. Combining this observation with \Cref{theo:odil-1D-i+1-i+2-i+3}, we follow:

\begin{corollary}\label{theo:bound-no-1-1d-1-2-page-plane}
There is no two-page plane (and therefore also not one-page plane) $1$-spanner for any point set $P$ with $|P|>3$.
\end{corollary}

Interestingly, the graph of \Cref{theo:1D-1-spanner} is planar. We construct a stack triangulation by adding points from left to right. The first three points form a triangle. Then, we inductively add the next point into the triangle formed by the previous three points. See \Cref{fig:one-spanner-stack-triangulation}.

\subsection{Two-Page Plane Spanners}\label{sec:1D-2PPB}
Due to \Cref{theo:bound-no-1-1d-1-2-page-plane}, no two-page plane $1$-spanner can exist. 
However, we can give a two-page plane $2$-spanner for every one-dimensional point set:

\begin{proposition}[two-page plane $2$-spanner]
\label{theo:1D-2PPB-2-spanner}
For every one-dimensional point set $P$, the graph $G=(P,E)$ with 
$E=\{(p_i,p_{i+1}),(p_{j+2},p_j)\mid 1\leq i\leq n-1, 1\leq j\leq n-2\}$ (see \Cref{fig:1D-2PPB-2-spanner})
is a two-page plane oriented $2$-spanner for $P$.
\end{proposition}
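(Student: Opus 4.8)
The plan is to establish the two asserted properties separately: that $G$ admits a two-page book embedding, and that its oriented dilation is at most~$2$.

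For the book embedding, I would place the $n$ points on the spine in their natural left-to-right order and draw every forward edge $(p_i,p_{i+1})$ between consecutive points directly on the spine, as permitted by the book-embedding convention. The only edges that can create crossings are then the backward arcs $(p_{j+2},p_j)$, each spanning exactly two consecutive indices. For $j<j'$, the arcs with left endpoints $j$ and $j'$ interleave along the spine (and hence would cross on a common page) precisely when $j<j'<j+2$, that is, when $j'=j+1$; if $j'=j+2$ they only share the endpoint $p_{j+2}$, and if $j'\geq j+3$ their index intervals are disjoint. Thus the conflict graph of the backward arcs is a path, and colouring the arcs by the parity of $j$ separates every crossing pair. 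Drawing the even-$j$ arcs above the spine and the odd-$j$ arcs below therefore yields a crossing-free two-page embedding.

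For the dilation I would invoke \cref{theo:odil-1D-i+2-i+3}, which reduces the task to bounding only $\odil(p_i,p_{i+2})$ and $\odil(p_j,p_{j+3})$, and recall that in one dimension $|\Delta(p_k,p_m)|=2(p_m-p_k)$ whenever $m\geq k+2$. For an index-distance-two pair, the triangle $p_i\to p_{i+1}\to p_{i+2}\to p_i$ (two forward edges followed by one backward arc) is an oriented cycle of length $2(p_{i+2}-p_i)=|\Delta(p_i,p_{i+2})|$, so $\odil(p_i,p_{i+2})=1$.

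The crux is the index-distance-three pairs, where $|\Delta(p_j,p_{j+3})|=2(p_{j+3}-p_j)$. Here I would exhibit a short oriented cycle through $p_j$ and $p_{j+3}$ as the concatenation of the forward path $p_j\to p_{j+1}\to p_{j+2}\to p_{j+3}$, of length $p_{j+3}-p_j$, with the return path $p_{j+3}\to p_{j+1}\to p_{j+2}\to p_j$, which uses a backward arc, a forward edge, and a second backward arc and has length $(p_{j+3}-p_j)+2(p_{j+2}-p_{j+1})$. The resulting cycle has length $2(p_{j+3}-p_j)+2(p_{j+2}-p_{j+1})$, so
\[
\odil(p_j,p_{j+3})\;\leq\;1+\frac{p_{j+2}-p_{j+1}}{p_{j+3}-p_j}\;\leq\;2,
\]
the last inequality holding because $[p_{j+1},p_{j+2}]\subseteq[p_j,p_{j+3}]$. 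The main obstacle is exactly the choice of this return path: the more naive route $p_{j+3}\to p_{j+1}\to p_{j-1}\to p_j$ detours out to $p_{j-1}$ and yields no constant bound, so the key observation is to return by revisiting $p_{j+1}$ and $p_{j+2}$ and reaching $p_j$ through a single extra short forward segment; one must also check that all three return edges are present for every admissible $j$, which they are since $1\leq j\leq n-3$.
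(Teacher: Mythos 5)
Your proof is correct and takes essentially the same approach as the paper: it reduces to index-distance-two and index-distance-three pairs via \cref{theo:odil-1D-i+2-i+3}, gets $\odil(p_i,p_{i+2})=1$ from the back edge $(p_{i+2},p_i)$, and bounds the distance-three pairs by exactly the closed walk $p_j\to p_{j+1}\to p_{j+2}\to p_{j+3}\to p_{j+1}\to p_{j+2}\to p_j$ that the paper uses in its case~i), giving the same bound $1+\frac{p_{j+2}-p_{j+1}}{p_{j+3}-p_j}\leq 2$. Your write-up is in fact slightly tidier, since exhibiting this single walk already upper-bounds the shortest oriented cycle (rendering the paper's second case unnecessary), and you additionally give an explicit parity-based argument for the two-page embedding, which the paper only conveys through a figure.
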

\begin{figure}[ht]
\centering
\begin{tikzpicture}[graph, directed, every node/.append style={minimum width=3em}, node distance=2.4em]
	\node[] (i) {$p_{i}$};
	\node[left= of i] (i-1) {$p_{i-1}$};
	\node[right= of i] (i+1) {$p_{i+1}$};
	\node[right= of i+1] (i+2) {$p_{i+2}$};
	\node[right= of i+2] (i+3) {$p_{i+3}$};
	\node[right= of i+3] (i+4) {$p_{i+4}$};
	\node[right= of i+4, graph/plain] (i+5) {};
	\node[left= of i-1, graph/plain] (i-2) {};
	
	\path[]
	(i-2) edge[dotted] (i-1)
	(i-1) edge (i)
	(i) edge (i+1)
	(i+1) edge (i+2)
	(i+2) edge (i+3)
	(i+3) edge (i+4)
	(i+4) edge[dotted] (i+5)
	(i+1) edge [bend left] (i-1)
	(i) edge [bend right, dotted] (i-2)
	(i+2) edge [bend right] (i)
	(i+3) edge [bend left] (i+1)
	(i+4) edge [bend right] (i+2)
	(i+5) edge [bend left, dotted] (i+3)
	;
\end{tikzpicture}
\caption{Part of $G=(P,E)$ with  $E=\{(p_i,p_{i+1}),(p_{j+2},p_j)\mid 1\leq i\leq n-1, 1\leq j\leq n-2\}$.}
\label{fig:1D-2PPB-2-spanner}
\end{figure}
\begin{proof}
Since $(p_{i+2},p_i)\in E$, it holds that $\odil(p_i,p_{i+2})=1$ for each tuple $p_i,p_{i+2}\in P$ and also $\odil(p_i,p_{i+1})=1$ for each tuple $p_i,p_{i+1}\in P$.
Due to \Cref{theo:odil-1D-i+1-i+2-i+3}, if we prove $\odil(p_i,p_{i+3})\leq 2$ for any tuple $p_i,p_{i+3}\in P$, then $G$ is a $2$-spanner for $P$.

We consider the closed walk 
\[C=  \{(p_{i+3},p_{i+1}),(p_{i+1},p_{i+2}),(p_{i+2},p_{i})\}\cup\{(p_j,p_{j+1})\mid i\leq j\leq i+2\}.\]

Since $C$ contains $p_{i}$ and $p_{i+3}$, it holds 
$|C_G(p_i,p_{i+3})|\leq |C|$ and therefore     
\begin{align*}
	\odil(p_i,p_{i+3})&=\frac{|C_G(p_i,p_{i+3})|}{|\Delta(p_i,p_{i+3})|} \leq \frac{|C|}{|\Delta(p_i,p_{i+3})|}&\\
	&= \frac{p_{i+3}-p_{i+1}+p_{i+2}-p_{i+1}+p_{i+2}-p_{i}+p_{i+3}-p_i}{(p_{i+3}-p_i)\cdot 2}&\\
	&=\frac{(p_{i+3}-p_{i+1})\cdot 2+(p_{i+1}-p_{i})\cdot 2+(p_{i+2}-p_{i+1})\cdot 2}{(p_{i+3}-p_i)\cdot 2}&\\
	&=\frac{p_{i+3}-p_{i}+p_{i+2}-p_{i+1}}{p_{i+3}-p_i} =1+\frac{p_{i+2}-p_{i+1}}{p_{i+3}-p_i} \leq  2. &\tag*{\qedhere}
\end{align*}    
	
	\end{proof}

	\subsection{One-Page Plane Spanners}\label{sec:1D-1PPB}
	
	The $2$-spanner given by Proposition \ref{theo:1D-2PPB-2-spanner} is two-page plane, but not one-page plane. As noted above, one-page plane graphs on one-dimensional point sets correspond to plane straight-line graphs if we interpret the point set as being convex. We thus place particular focus on one-page plane graphs, since they are not only of interest in their own right but also aid us in finding oriented plane spanners in two dimensions.
	
	
	By \emph{maximal one-page plane}, we mean a one-page plane graph $G=(P,E)$ such that for every edge $e\notin E$, the graph $G'=(P, E \cup \{e\})$ is not  one-page plane. 
	We call the edge set $\{(p_i,p_{i+1})\mid 1\leq i\leq n-1\}$ the \emph{baseline}. A directed edge $(p_j,p_i)$ with \ $i<j$ is a \emph{back edge}.
	
	We refer to an oriented one-page plane graph that includes a baseline and all other edges are back edges as a one-page-plane-baseline graph, for short \emph{$1$-PPB graph}.
	\Cref{theo:baseline} shows that a $1$-PPB graph has at most the same dilation as an oriented graph with the same edge set but another orientation. 
	For the remainder of this paper, we will focus exclusively on $1$-PPB graphs rather than general one-page plane graphs. This restriction is justified by \Cref{theo:baseline}, which indicates that it does not affect the minimum attainable dilation. Consequently, the construction of  one-page plane spanners is simplified.  
	Notably, computing a minimum dilation of a $1$-PPB graph yields a minimum dilation one-page plane graph (compare to \Cref{theo:opt-alg-1PPB}).
	
	\begin{lemma}
\label{theo:baseline}
Let $G=(P,E)$ be a one-page plane oriented $t$-spanner for a one-dimensional point set $P$. 
Consider a $1$-PPB graph $G'=(P,E')$, where $E'$ contains the baseline edges and for every edge in $E$ the same edge but oriented according to the conditions of a $1$-PPB graph. Then the dilation of $G'$ is at most $t$. 
\end{lemma}
\begin{proof}
Without loss of generality we assume that for all $i$ with $1 \leq i \leq n-1$ either $(p_i, p_{i+1})$ or $(p_{i+1}, p_i)$ is in $E$, since adding these would only decrease $t$ while $E'$ contains the baseline anyway.
Let $G^u=(P,E^u)$ be the underlying undirected graph of $G$. 
Let $p,p'$ be two points in $P$. 
Let $C_G(p,p')$ be the shortest closed walk in $G$ containing $p$ and $p'$. Let $p_l$ be the leftmost and $p_r$ be the rightmost point of $C_G(p,p')$, respectively.
Consider the edges in $G^u$ corresponding to $C_G(p,p')$, that is $E_C^u=\{\{p_i,p_j\} \in E^u \mid (p_i, p_j) \in C_G(p,p') \text{ or } (p_j,p_i) \in C_G(p,p')\}$. 

We define the \emph{inclusion maximal edges} in this set as those edges, which are not covered by longer edges, that is $\imax(E_C^u) = \{ \{p_i,p_j\} \in E_C^u \mid p_i', p_j' \in P \text{ with } i' \leq i < j \leq j':~ \{p_i', p_j'\} \notin E_C^u \}$.
For this set, we can state that
\begin{itemize}
	\item[i)] For all $\{p_i, p_j\} \in \imax(E_C^u)$ we have $j > i+1$, because an edge $\{p_i, p_{i+1}\}$ cannot be inclusion maximal, since the interval between $p_i$ and $p_{i+1}$ has to be covered by at least two edges to allow a connected graph.
	\item[ii)] The edges in $\imax(E_C^u)$ partition 
	an interval $[p_\ell, p_r]$, as $G$ is one-page plane and thus $C_G(p,p')$ would be disconnected otherwise.
\end{itemize}
By using only the inclusion maximal and baseline edges, 
it follows that the orientation of $E_C^u$ which is given by $G'$ includes a closed walk of length $2 \cdot (p_r - p_\ell) $ and further: 
$$|C_{G'}(p,p')| \leq 2 \cdot (p_r - p_\ell ) \leq |C_G(p,p')|\;.$$
Thus the oriented dilation of $p$ and $p'$ in the $1$-PPB $G'$ is smaller or equals their oriented dilation in $G$.
\end{proof}
\begin{lemma}
\label{theo:odil-1D-1sided}
Let $P$ be a one-dimensional point set of $n$ points. The oriented dilation $t$ of a $1$-PPB graph for $P$ is 
\[t= \max\{\odil(p_i,p_{i+2})\mid 1\leq i\leq n-2 \}.\]
\end{lemma}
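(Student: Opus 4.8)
The plan is to build on \cref{theo:odil-1D-i+2-i+3}: since that lemma already reduces the dilation of an arbitrary oriented graph to the maximum over the two-apart terms $\odil(p_i,p_{i+2})$ and the three-apart terms $\odil(p_j,p_{j+3})$, it suffices to show that for a $1$-PPB graph each three-apart term is dominated, i.e.\ $\odil(p_j,p_{j+3})\le\max_{1\le i\le n-2}\odil(p_i,p_{i+2})$ for every $j$ with $1\le j\le n-3$. Once this holds, the three-apart terms fall out of the maximum and only the two-apart terms remain, which is exactly the claimed formula. Note that \cref{theo:odil-1D-i+2-i+3} is proved for \emph{all} oriented graphs, so one-page planarity must be the ingredient that lets us discard the three-apart terms.

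First I would pin down the shape of shortest cycles in a $1$-PPB graph, claiming that every simple oriented cycle is a contiguous baseline run $p_\ell\to p_{\ell+1}\to\dots\to p_r$ closed by a single back edge $(p_r,p_\ell)$. To see this, take the leftmost vertex $p_\ell$ of the cycle; its out-edge cannot go further left, so it is the baseline edge $(p_\ell,p_{\ell+1})$, and following the forward run $p_\ell\to p_{\ell+1}\to\cdots$ we reach the first vertex $p_r$ whose out-edge is a back edge $(p_r,p_c)$. If $c>\ell$ then $p_c$ was already visited on the run, contradicting simplicity, so $c=\ell$ and the cycle closes. Consequently $|C_G(p_a,p_b)|=2\,(p_r-p_\ell)$, where $(p_r,p_\ell)$ is the shortest back edge in $G$ with $\ell\le a$ and $r\ge b$; combined with $|\Delta(p_a,p_b)|=2\,(p_b-p_a)$ for $b\ge a+2$ this yields the clean formula $\odil(p_a,p_b)=(p_r-p_\ell)/(p_b-p_a)$.

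The crux is then a crossing argument. Fix $j$ and let $S$ be the span $p_r-p_\ell$ of the shortest back edge covering the index interval $[j,j+3]$, and let $s_1,s_2$ be the spans of the shortest back edges covering $[j,j+2]$ and $[j+1,j+3]$. Since the edge realising $S$ also covers both sub-intervals, $s_1\le S$ and $s_2\le S$. The key step is to rule out $s_1<S$ and $s_2<S$ simultaneously: a minimiser for $s_1<S$ (with left endpoint $\ell_1\le j$) cannot reach index $j+3$, so its right endpoint is exactly $j+2$; symmetrically a minimiser for $s_2<S$ (with right endpoint $r_2\ge j+3$) has left endpoint exactly $j+1$. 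The two arcs $[\ell_1,j+2]$ and $[j+1,r_2]$ then interleave as $\ell_1<j+1<j+2<r_2$, which is impossible in a one-page (non-crossing) embedding. Hence $s_1=S$ or $s_2=S$; assuming $s_1=S$, monotonicity of the denominator gives $\odil(p_j,p_{j+2})=S/(p_{j+2}-p_j)\ge S/(p_{j+3}-p_j)=\odil(p_j,p_{j+3})$, and since both $(p_j,p_{j+2})$ and $(p_{j+1},p_{j+3})$ are two-apart pairs, the desired domination follows.

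I expect the structural characterisation of cycles and, above all, the non-crossing argument excluding $s_1<S$ and $s_2<S$ to be the main obstacle: this is the only place where one-page planarity is used, and getting the endpoint case analysis exactly right (forcing the right endpoint $j+2$ and the left endpoint $j+1$ and reading off the interleaving) is the delicate part. The remaining algebra is just the denominator monotonicity already exploited in \cref{theo:odil-1D-i+2-i+3}.
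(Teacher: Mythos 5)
Your proposal is correct and follows essentially the same route as the paper: both reduce via \cref{theo:odil-1D-i+2-i+3} to showing that each three-apart term $\odil(p_j,p_{j+3})$ is dominated by $\max\{\odil(p_j,p_{j+2}),\odil(p_{j+1},p_{j+3})\}$, and both use one-page planarity (two back edges with interleaving endpoints would cross) to force the shortest cycle through $p_j,p_{j+3}$ to coincide with a shortest cycle through one of the two sub-pairs. Your explicit characterisation of cycles as baseline runs closed by a single back edge, together with the span argument that $s_1=S$ or $s_2=S$, is a more detailed rendering of the paper's terser case distinction on whether $C_G(p_j,p_{j+3})=C_G(p_{j+1},p_{j+3})$.
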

\begin{proof}
Let $G$ be a $1$-PPB spanner. We show that for any point $p_j\in P$ it holds that
\begin{align*}
	\odil(p_j,p_{j+1})\leq {\max} \{\odil(p_j,p_{j+2}),\odil(p_{j-1},p_{j+1})\} & \text{ and }\\
	\odil(p_j,p_{j+3})\leq {\max} \{\odil(p_j,p_{j+2}),\odil(p_{j+1},p_{j+3})\}.&
\end{align*}
By \Cref{theo:odil-1D-i+1-i+2-i+3}, this completes the proof of  this lemma.

First, we show $\odil(p_j,p_{j+1})\leq {\max} \{\odil(p_j,p_{j+2}),\odil(p_{j-1},p_{j+1})\}$.

The third point of the optimal cycle $\Delta(p_j,p_{j+1})$ is $p_{j-1}$ or $p_{j+2}$. 
For graphs with baseline, it holds that $|C_G(p_{j},p_{j+1})|\leq |C_G(p_{j},p_{j+2})|$ and $|C_G(p_{j},p_{j+1})|\leq |C_G(p_{j-1},p_{j+1})|$. 
If $\Delta(p_j,p_{j+1})$ contains $p_{j+2}$,  the dilation is bounded by
\begin{align*}
	\odil(p_j,p_{j+1})&=\frac{|C_G(p_{j},p_{j+1})|}{|\Delta(p_j,p_{j+1})|}=\frac{|C_G(p_{j},p_{j+1})|}{\left(p_{j+2}-p_{j}\right)\cdot 2}\\
	&\leq \frac{|C_G(p_{j},p_{j+2})|}{|\Delta(p_j,p_{j+2})|}=\odil(p_j,p_{j+2}).
\end{align*}   
Otherwise, we can show analogously that $\odil(p_k,p_{k+1})\leq \odil(p_{k-1},p_{k+1})$.

Now, we show  $\odil(p_j,p_{j+3})\leq {\max} \{\odil(p_j,p_{j+2}),\odil(p_{j+1},p_{j+3})\}$. For graphs with baseline,  we have $|C_G(p_{j+1},p_{j+3})| \leq |C_G(p_{j},p_{j+3})|$.

If $|C_G(p_j,p_{j+3})|=|C_G(p_{j+1},p_{j+3})|$, it holds that
\begin{align*}
	\odil(p_j,p_{j+3}) &=\frac{|C_G(p_{j},p_{j+3})|}{| \Delta(p_{j},p_{j+3})|}= \frac{|C_G(p_{j+1},p_{j+3})|}{(p_{j+3}-p_{j})\cdot 2}\\
	&\leq \frac{|C_G(p_{j+1},p_{j+3})|}{(p_{j+3}-p_{j+1})\cdot 2} =\odil(p_{j+1},p_{j+3}).
\end{align*}

If $|C_G(p_{j},p_{j+3})|> |C_G(p_{j+1},p_{j+3})|$, the leftmost point of $C_G(p_{j+1},p_{j+3})$ is $p_{j+1}$. This implies $|C_G(p_j,p_{j+2})|=|C_G(p_{j},p_{j+3})|$ and therefore 
\begin{align*}
	\odil(p_j,p_{j+3}) &=\frac{|C_G(p_{j},p_{j+3})|}{| \Delta(p_{j},p_{j+3})|}= \frac{|C_G(p_{j},p_{j+2})|}{(p_{j+3}-p_{j})\cdot 2} &\\
	&\leq \frac{|C_G(p_{j},p_{j+2})|}{(p_{j+2}-p_{j})\cdot 2} =\odil(p_{j},p_{j+2}).\tag*{\qedhere}
\end{align*}
\end{proof}

\Cref{theo:odil-1D-1sided} holds for every $1$-PPB graph, even if it is not maximal. \Cref{theo:1D-2PPB-2-spanner} shows that \Cref{theo:odil-1D-1sided} does not hold for non-one-page plane graphs.


Due to planarity, for a point set $P$ with $|P|>3$, every one-page plane graph $G$ contains a tuple $p_i,p_{i+2}\in P$ where $C_G(p_i,p_{i+2})\neq \Delta(p_i,p_{i+2})$. So, there is no one-page plane $1$-spanner for any point set $P$ with
$|P|>3$ (\Cref{theo:bound-no-1-1d-1-2-page-plane}). 

However, there are point sets with a one-page plane almost $1$-spanner.
Let $G=(P,E)$ be a graph with $|P|=5$, $E=\{(p_{i},p_{i+1})\mid 1\leq i\leq n-1\}\cup \{(p_3,p_1),(p_5,p_3)\}$ and the distances $p_2-p_1=p_5-p_4=\varepsilon$ and $p_3-p_2=p_4-p_3=1$ (see \Cref{fig:1D-1PPB-almost-1-spanner}). It holds that $t= \odil(p_2,p_4)=\frac{2+2\varepsilon}{2}$. For an arbitrary small $\varepsilon$, $G$ is a one-page plane almost $1$-spanner. 

\begin{figure}[ht]
\centering
\begin{tikzpicture}[graph, directed,scale=1.2]
	\foreach \a in {1,...,5}{
		\node[] (\a) at (\a,0) {$p_\a$} ;
	}
	
	\draw (1) edge (2);
	\draw (2) edge (3);
	\draw (3) edge (4);
	\draw (4) edge (5);
	
	\path[every edge/.append style={bend right=35}]
	(3) edge (1)  
	(5) edge (3)
	;
	
	\path[]
	($(1.south)-(0,.3)$) edge[|-]  node[below] {$\varepsilon$} ($(2.south)-(0,.3)$)
	($(2.south)-(0,.3)$) edge[|-]  node[below] {$1$} ($(3.south)-(0,.3)$)
	($(3.south)-(0,.3)$) edge[|-]  node[below] {$1$} ($(4.south)-(0,.3)$)
	($(5.south)-(0,.3)$) edge[|-|]  node[below] {$\varepsilon$} ($(4.south)-(0,.3)$)         
	;
\end{tikzpicture}
\caption{$1$-PPB almost $1$-spanner}
\label{fig:1D-1PPB-almost-1-spanner}
\end{figure}


\begin{observation}
\label{theo:bound-2-1d-1sided}
There are one-dimensional point sets where no one-page plane oriented $t$-spanner exists for $t < 2$.
\end{observation}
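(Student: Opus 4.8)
The plan is to exhibit one concrete bad instance and show that every one-page plane orientation on it has dilation at least~$2$. By the reduction in \cref{theo:baseline} we may replace any one-page plane spanner by a $1$-PPB graph on the same underlying edges without increasing the dilation, so it suffices to show that \emph{every} $1$-PPB graph on the instance has dilation at least~$2$. Moreover, by \cref{theo:odil-1D-1sided} the dilation of a $1$-PPB graph equals $\max_i\odil(p_i,p_{i+2})$, so I only need to bound the three quantities $\odil(p_1,p_3)$, $\odil(p_2,p_4)$, $\odil(p_3,p_5)$ from below.

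I would take $P=\{p_1,\dots,p_5\}$ to be five equally spaced points, say at coordinates $0,1,2,3,4$. This is exactly the configuration of \cref{fig:1D-1PPB-almost-1-spanner} with the degenerate gaps $\varepsilon$ replaced by uniform gaps: there the almost-$1$-spanner attains dilation $1+\varepsilon$ precisely because the forced long cycle through $p_2,p_4$ has length $2(p_5-p_1)$, and equal spacing pushes this ratio up to exactly~$2$. Here $|\Delta(p_i,p_{i+2})|=2(p_{i+2}-p_i)=4$ for every $i$, and the shortest oriented cycle through $p_i,p_{i+2}$ in a $1$-PPB graph has length $2(p_j-p_k)$, where $(p_j,p_k)$ is the shortest back edge whose index interval $[k,j]$ contains $\{i,i+1,i+2\}$ (if no such back edge exists, the cycle and hence the dilation is infinite). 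Consequently $\odil(p_i,p_{i+2})<2$ forces the graph to contain a covering back edge of index-span at most~$3$.

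The key step is then a short non-crossing (laminarity) argument. Any cover of $\{1,2,3\}$ of index-span at most~$3$ must be $[1,3]$ or $[1,4]$, and any such cover of $\{3,4,5\}$ must be $[2,5]$ or $[3,5]$. Two such intervals, one anchored at the leftmost vertex~$1$ and the other at the rightmost vertex~$5$, fail to cross only when they meet exactly at vertex~$3$, i.e.\ they are forced to be $[1,3]$ and $[3,5]$. But then any back edge covering $\{2,3,4\}$ and non-crossing with both $[1,3]$ and $[3,5]$ must start at~$1$ (to cover~$2$ without crossing $[1,3]$) and end at~$5$ (to cover~$4$ without crossing $[3,5]$); that is, it must be $[1,5]$, of span~$4$. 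Hence in any $1$-PPB graph at least one of the three middle triples admits only the span-$4$ cover $[1,5]$, giving $\odil\ge 4/2=2$, so $t\ge 2$.

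The main obstacle is not the arithmetic but making the combinatorial structure airtight. First, I must verify that the shortest cycle through $p_i,p_{i+2}$ really is twice the shortest \emph{single} covering back edge, ruling out that several shorter back edges combine into a cheaper cycle; this is handled by the identity that any oriented cycle has length $2\cdot(\text{total back-edge length})\ge 2(p_{i+2}-p_i)$, with equality approached only by a single tight covering edge. Second, I must check that allowing covers of span~$3$ (not merely the tight span-$2$ edges) still cannot free all three triples at once, which is exactly what the anchoring-at-the-extreme-vertices argument above establishes. Finally, I would state the contrast with \cref{fig:1D-1PPB-almost-1-spanner} explicitly, to emphasise that this is a lower bound for the specific equally spaced instance rather than for all five-point sets.
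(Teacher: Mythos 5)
Your proof is correct, and it reaches the paper's bound on the same instance (five equally spaced points) via the same two reductions (\cref{theo:baseline} to restrict to $1$-PPB graphs, and \cref{theo:odil-1D-1sided} to restrict attention to the three triples), but your key step is genuinely different from the paper's. The paper additionally invokes maximality, enumerates all maximal $1$-PPB graphs on five points (\cref{fig:1-sided-5points-all-graphs}), and observes that each of them contains a pair $p_i,p_{i+2}$ whose shortest cycle spans $[p_1,p_5]$. You replace this enumeration by a structural non-crossing argument: dilation below $2$ on the two outer triples forces covering back edges of span at most $3$, which are necessarily anchored at $p_1$ and at $p_5$; non-crossing then forces them to be exactly $[1,3]$ and $[3,5]$, and any back edge covering the middle triple that avoids crossing both must be $[1,5]$, of span $4$. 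This buys you two things: there is no case list whose completeness must be trusted, and no appeal to maximality at all --- your argument applies verbatim to every $1$-PPB graph. What the paper's route buys is brevity: a figure plus a one-line observation.

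One imprecision you should repair, although it does not break the bound. Your claim that the shortest cycle through $p_i,p_{i+2}$ is twice the length of the shortest \emph{single} covering back edge, and is infinite if no such edge exists, is false as stated: non-crossing back edges that share an endpoint can chain. For instance, with back edges $(p_3,p_1)$ and $(p_5,p_3)$ no single back edge covers $\{2,3,4\}$, yet the cycle $p_1 \to p_2 \to p_3 \to p_4 \to p_5 \to p_3 \to p_1$ exists and has length $8$. The statement your argument actually needs is weaker and true: a cycle's length is twice the total length of the back edges it uses, those back edges must cover an interval containing $[p_i,p_{i+2}]$, and since every back edge has span at least $2$, a chain of two or more non-crossing back edges can never cover an interval of span at most $3$. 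Hence dilation below $2$ does force a single covering back edge of span at most $3$, which is the only consequence you use; and a span-$4$ cover, whether a single edge or a chain, yields a cycle of length exactly $8$ and thus dilation $2$, so the conclusion $t \geq 2$ stands.
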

\begin{proof}
We show that $t\geq 2$ holds for every $t$-spanner of the point set $P$ with $|P|=5$ where each consecutive pair of points has distance $1$.
Since a $1$-PPB spanner has a smaller or equal dilation than any one-page plane spanner with the same edge set but another orientation (\Cref{theo:baseline}) and maximality decreases dilation, it suffices to look at all maximal $1$-PPB graphs for $P$. These are listed in \Cref{fig:1-sided-5points-all-graphs}. 
For every maximal $1$-PPB graph $G$ for $P$, there is a tuple $p_i,p_{i+2}\in P$ where $p_1$ is the leftmost point in $C_G(p_i,p_{i+2})$ and $p_5$ the rightmost point. 
Combining these observations with \Cref{theo:odil-1D-1sided}, gives
\[t =  \underset{1\leq i \leq n-2}{\max} \frac{|C_G(p_i,p_{i+2})|}{|\Delta(p_i,p_{i+2})|} \geq \frac{\left(p_5-p_1\right)\cdot 2}{4}=\frac{8}{4}=2\]
for the dilation of every one-page plane oriented spanner for $P$.
\end{proof}

\begin{figure}[ht]
\centering
\begin{tikzpicture}[graph, directed]
	\node[] (1) {$p_1$};
	\node[right= of 1,rectangle, inner sep=5pt] (2) {$p_2$};
	\node[right= of 2] (3) {$p_3$};
	\node[right= of 3,rectangle, inner sep=5pt] (4) {$p_4$};
	\node[right= of 4] (5) {$p_5$};
	
	\draw (1) edge (2);
	\draw (2) edge (3);
	\draw (3) edge (4);
	\draw (4) edge (5);
	
	\path[every edge/.append style={bend right}]
	(3) edge (1)  
	(5) edge (3)
	(5) edge (1) 
	;
\end{tikzpicture}\\
\begin{tikzpicture}[graph, directed]
	\node[] (1) {$p_1$};
	\node[right= of 1] (2) {$p_2$};
	\node[right= of 2,rectangle, inner sep=5pt] (3) {$p_3$};
	\node[right= of 3] (4) {$p_4$};
	\node[right= of 4,rectangle, inner sep=5pt] (5) {$p_5$};
	
	\draw (1) edge (2);
	\draw (2) edge (3);
	\draw (3) edge (4);
	\draw (4) edge (5);
	
	\path[every edge/.append style={bend right}]
	(3) edge (1)
	(4) edge (1)
	(5) edge (1)
	;
\end{tikzpicture}\\
\begin{tikzpicture}[graph, directed]
	\node[] (1) {$p_1$};
	\node[right= of 1] (2) {$p_2$};
	\node[right= of 2,rectangle, inner sep=5pt] (3) {$p_3$};
	\node[right= of 3] (4) {$p_4$};
	\node[right= of 4,rectangle, inner sep=5pt] (5) {$p_5$};
	
	\draw (1) edge (2);
	\draw (2) edge (3);
	\draw (3) edge (4);
	\draw (4) edge (5);
	
	\path[every edge/.append style={bend right}]
	(4) edge (2)
	(4) edge (1)
	(5) edge (1)
	;
\end{tikzpicture}\\
\begin{tikzpicture}[graph, directed]
	\node[rectangle, inner sep=5pt] (1) {$p_1$};
	\node[right= of 1] (2) {$p_2$};
	\node[right= of 2,rectangle, inner sep=5pt] (3) {$p_3$};
	\node[right= of 3] (4) {$p_4$};
	\node[right= of 4] (5) {$p_5$};
	
	\draw (1) edge (2);
	\draw (2) edge (3);
	\draw (3) edge (4);
	\draw (4) edge (5);
	
	\path[every edge/.append style={bend right}]
	(4) edge  (2)
	(5) edge  (2)
	(5) edge  (1)
	;
\end{tikzpicture}\\
\begin{tikzpicture}[graph, directed]
	\node[rectangle, inner sep=5pt] (1) {$p_1$};
	\node[right= of 1] (2) {$p_2$};
	\node[right= of 2,rectangle, inner sep=5pt] (3) {$p_3$};
	\node[right= of 3] (4) {$p_4$};
	\node[right= of 4] (5) {$p_5$};
	
	\draw (1) edge (2);
	\draw (2) edge (3);
	\draw (3) edge (4);
	\draw (4) edge (5);
	
	\path[every edge/.append style={bend right}]
	(5) edge  (3)
	(5) edge   (2)
	(5) edge  (1)
	;
	
	\foreach \a in {2,3,4}{
		\pgfmathtruncatemacro\b{\a-1}
		\draw ($(\a)-(0,.7)$) edge[-|]  node[below] {$1$} ($(\b)-(0,.7)$);
	}
	\draw ($(5)-(0,.7)$) edge[|-|]  node[below] {$1$} ($(4)-(0,.7)$);
\end{tikzpicture}
\caption{All maximal $1$-PPB spanners for a one-dimensional set of $5$ points. The worst-case pair is highlighted.}
\label{fig:1-sided-5points-all-graphs}
\end{figure}


\subsubsection{Constant-Dilation One-Page Plane Spanner}


A first approach to construct a one-page plane oriented spanner is orienting the greedy triangulation of a given point set. We will show that this leads to a maximal $1$-PPB $5$-spanner.

\subparagraph{Algorithm.} 
\Cref{alg:greedy-1page-1dim} computes a maximal $1$-PPB spanner by starting with the baseline and greedily adding back edges sorted increasing by length if they do not cross any of the edges already added. The first added edge is the shortest edge between two points with exactly one point in between. Imagine we delete the point that was in between. Then again, we add the shortest edge with exactly one point in between, and so on, until only two points are left. 
This deletion is implemented by a list $L$, which is initialised with the point set. We delete a point $p$ from $L$ when we add an edge that covers $p$. In order to quickly compute the shortest edge, for each point $p$ in $L$, the length of the shortest edge that covers $p$ is stored in a priority queue.
The resulting graph is an oriented greedy triangulation.\bigskip

\begin{algorithm}[t]
\caption{Greedy One-Page Plane $5$-Spanner}
\label{alg:greedy-1page-1dim}
\begin{algorithmic}
\REQUIRE one-dimensional point set $P=\{p_1,\ldots,p_n\}$ (numbered from left to right)
\ENSURE one-page plane oriented $5$-spanner for $P$
\STATE $E\leftarrow \{(p_{i},p_{i+1})\mid 1\leq i \leq n-1 \}$ \COMMENT{start with the baseline}
\STATE $L\leftarrow$ doubly linked list with the elements $p_1,\ldots,p_n$
\STATE $T\leftarrow$ empty priority queue\\
$T$ stores tuples $(p,v)$ where $p$ is a pointer to a point in $L$ and the value $v=L(p).\texttt{next}-L(p).\texttt{prev}$ (distance of $p$'s neighbours in $L$)  \COMMENT{$v$ is the length of the edge which covers $p$}\\
\FOR{$i=2$ \textbf{to} $n-1$ }
\STATE add $(p_i,\left|p_{i+1}-p_{i-1}\right|)$ to $T$ \COMMENT{initially all edges with $1$ point in between}
\ENDFOR
\WHILE{$T\neq \emptyset$}
\LINECOMMENT{$(p_r,p_l)$ is the current shortest edge}
\STATE $p\leftarrow T.\texttt{extractMin()}$
\STATE $p_l\leftarrow L(p).\texttt{prev}$
\STATE $p_r\leftarrow L(p).\texttt{next}$
\STATE $E\leftarrow E\cup \{(p_r,p_l)\}$ 
\LINECOMMENT{$p$ is covered by $(p_r,p_l)$, i.e.\ due to planarity there will be no more edges incident to $p$}
\STATE delete $p$ in $L$ and $T$ 
\STATE update the values in $T$ of $p_l$ and $p_r$ 
\ENDWHILE
\RETURN $G=(P,E)$ 
\end{algorithmic}
\end{algorithm}

To bound the dilation of the output graph, we need the concept of a \emph{blocker}. 

\begin{definition}[blocker]\label{def:greedy-blocker}
Let $E$ be the greedily computed edge set. Because the resulted graph $G=(P,E)$ is maximal, $(p_j,p_i)\notin E$ for $i+2 \leq j$ implies there is a shorter edge in $E$ which intersects with $(p_j,p_i)$. (The greedy algorithm added this edge first and discarded $(p_{j},p_i)$ in a later iteration of the loop.)  
For the shortest edge intersecting $(p_{j},p_i)$, we say it \emph{blocks} $(p_{j},p_i)$. The edge can \emph{be blocked by} $(p_{k},p_{m})$ with $k> j$ and $i<m<j$ or $(p_{m},p_{k'})$ with $k'<i$ and $i<m<j$ (see \Cref{fig:blockers-for-i-j}). 
\end{definition}

\begin{figure}[ht]
\centering
\begin{tikzpicture}[graph, every node/.append style={minimum size=2.8em}, directed]
\node[] (k') {$p_{k'}$};
\node[right= of k'] (i) {$p_{i}$};
\node[right= of i] (m) {$p_{m}$};
\node[right= of m] (j) {$p_{j}$};
\node[right= of j] (k) {$p_{k}$};

\path[]
(k') edge[dotted] (i)
(i) edge[dotted] (m)
(m) edge[dotted] (j)
(j) edge[dotted] (k)
(j) edge[bend right, dashed, firstColor] (i)
(k) edge[bend right] (m)
(m) edge[bend right] (k')
;
\end{tikzpicture}    
\caption{$(p_{j},p_i)$ can be blocked by $(p_{k},p_{m})$ or $(p_{m},p_{k'})$ with $i<m<j$, $k> j$ and $k'<i$}
\label{fig:blockers-for-i-j}
\end{figure}

\begin{theorem}[one-page plane $5$-spanner]
\label{theo:alg-greedy}
Given a one-dimensional point set $P$ of size $n$, a one-page plane oriented $5$-spanner can be constructed in $\bigo(n\log n)$ time.
\end{theorem}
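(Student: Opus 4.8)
The running time is already in hand: by \cref{theo:1D-runtime-correctness-greedy} the graph $G$ produced by \cref{alg:greedy-1page-1dim} is a maximal $1$-PPB graph computed in $\bigo(n\log n)$ time, so it only remains to bound its oriented dilation by $5$. Since $G$ is $1$-PPB, \cref{theo:odil-1D-1sided} reduces this to proving $\odil(p_i,p_{i+2})\le 5$ for every consecutive triple $p_i,p_{i+1},p_{i+2}$. My first step is to identify the shortest oriented cycle through $p_i$ and $p_{i+2}$ in a $1$-PPB graph: every forward edge of such a cycle is a baseline edge and every backward edge is a back edge, and a short combinatorial argument (mirroring the interval-cover reasoning of \cref{theo:baseline}) shows that a simple oriented cycle can use \emph{exactly one} back edge. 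Hence the shortest cycle through $p_i,p_{i+2}$ is $p_\ell\to p_{\ell+1}\to\dots\to p_r\to p_\ell$ for the shortest back edge $(p_r,p_\ell)$ that \emph{covers} $[p_i,p_{i+2}]$ (that is, $\ell\le i$ and $r\ge i+2$), of length $2(p_r-p_\ell)$. As $|\Delta(p_i,p_{i+2})|=2(p_{i+2}-p_i)$, writing $d=p_{i+2}-p_i$ the task becomes purely combinatorial: exhibit a back edge covering $[p_i,p_{i+2}]$ of length at most $5d$. I also observe that this shortest covering edge is exactly $\{p_L,p_R\}$, where $p_L,p_R$ are the leftmost and rightmost neighbours of $p_{i+1}$ (their joining edge bounds the triangulation face incident to $p_{i+1}$ opposite the two baseline edges), so equivalently I must bound the spread $p_R-p_L$ of $p_{i+1}$'s neighbourhood by $5d$.

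If $(p_{i+2},p_i)\in E$ then $p_{i+1}$'s only neighbours are $p_i,p_{i+2}$, the covering edge is $\{p_i,p_{i+2}\}$, and $\odil(p_i,p_{i+2})=1$. Otherwise $(p_{i+2},p_i)$ is blocked (\cref{def:greedy-blocker}) by a strictly shorter crossing edge; every edge crossing $\{p_i,p_{i+2}\}$ is incident to $p_{i+1}$ and leaves the interval either left or right, so the blocker is some $(p_k,p_{i+1})$ with $k\ge i+3$ (or the mirror image $(p_{i+1},p_m)$ with $m\le i-1$, handled symmetrically), and $p_k-p_{i+1}<d$ gives $p_k<p_i+2d$. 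Maximality forces the triangle on the baseline edge $\{p_{i+1},p_{i+2}\}$ to be $\{p_{i+1},p_{i+2},p_k\}$, so $(p_k,p_{i+2})\in E$ as well. To convert this into a covering edge I track $\{p_i,p_k\}$: if present it already covers $[p_i,p_{i+2}]$ and has length $p_k-p_i<2d$, giving $\odil<2$; if absent it is itself blocked by a strictly shorter edge, and a case analysis—using that this blocker cannot cross the present edges $(p_k,p_{i+1})$, $(p_k,p_{i+2})$, and that no point lies strictly between the consecutive $p_i,p_{i+1}$—shows the blocker is either a covering edge of length $<2d$ or a new back edge incident to $p_{i+1}$ on the opposite side. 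The latter enlarges the known neighbourhood of $p_{i+1}$, and repeating the analysis on the enlarged neighbourhood yields, in each branch, an explicit covering edge whose length I then estimate.

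The crux, and where the constant $5$ is spent, is bounding the total reach of the extreme neighbours $p_L,p_R$. The delicate point is that the left and right reaches are \emph{coupled} by the greedy priority order: in the equivalent view via \cref{alg:greedy-1page-1dim}, $p_{i+1}$ is removed precisely when the gap between its current list-neighbours $(p_L,p_R)$ becomes the minimum key, and each time a right neighbour is removed before $p_{i+1}$ the right reach can grow by at most the \emph{current left reach} (and symmetrically), since the removed neighbour's key cannot exceed that of $p_{i+1}$. I expect the main work to be showing that this coupled growth, together with the fact that greedy resolves short edges first (so a dense cluster just right of $p_{i+1}$ is contracted among its own points and cannot repeatedly push $p_R$ outward), caps $p_R-p_L$ at $5d$. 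The lower-bound construction behind \cref{theo:bound-2-1d-1sided} suggests the extremal configuration is a short two-sided staircase, which I would use both to guide the charging argument and to confirm the bound is not beaten by this greedy rule. Once $\odil(p_i,p_{i+2})\le 5$ holds for all $i$, \cref{theo:odil-1D-1sided} gives dilation at most $5$, and together with the running time from \cref{theo:1D-runtime-correctness-greedy} the theorem follows.
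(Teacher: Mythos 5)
Your setup (the running time via \cref{theo:1D-runtime-correctness-greedy} and the reduction to consecutive pairs $p_i,p_{i+2}$ via \cref{theo:odil-1D-1sided}) matches the paper, but your structural claim about shortest cycles is false, and it steers the whole plan toward a statement that cannot give the constant $5$. In a $1$-PPB graph a simple oriented cycle need \emph{not} consist of the baseline plus a single back edge: it can pass through $p_{i+1}$ using \emph{two} back edges, $(p_r,p_{i+1})$ followed by $(p_{i+1},p_\ell)$, closed by the baseline from $p_\ell$ to $p_r$. Such a cycle has length $2(p_r-p_\ell)$, where $p_r$ and $p_\ell$ are the \emph{nearest} incoming and outgoing back-neighbours of $p_{i+1}$, and these are in general much closer than the extreme neighbours $p_L,p_R$ of $p_{i+1}$. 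The paper relies on exactly these two-back-edge cycles: they are case d) of \cref{alg:odil-1PPB} and case ii) of the paper's proof of \cref{theo:alg-greedy} (see \cref{fig:1D-alg-greedy-2-arc}).

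This is fatal for your reduction, because the inequality you end up needing, $p_R-p_L\le 5\,(p_{i+2}-p_i)$, is simply false for the greedy triangulation. Take the paper's own example in \cref{fig:1PPB-greedy-worstcase-7}, together with the edge $(p_7,p_1)$, which the greedy algorithm also adds (the figure omits it; note that without it there is no covering back edge for $[p_2,p_4]$ at all, so your claimed shortest cycle would not even exist). For the pair $p_2,p_4$ one has $d=p_4-p_2=1+\varepsilon$, while the neighbourhood of $p_3$ reaches from $p_1$ to $p_7$, so $p_R-p_L=7-9\varepsilon\approx 7d$; the unique covering back edge is $(p_7,p_1)$, and the corresponding one-back-edge cycle has dilation $(7-9\varepsilon)/(1+\varepsilon)>5$ for small $\varepsilon$. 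The graph is nevertheless a $5$-spanner precisely because the true shortest cycle is the two-back-edge cycle $p_1\to\dots\to p_5\to p_3\to p_1$ of spread $p_5-p_1=5-7\varepsilon$. So bounding the neighbourhood spread, even if carried out, could at best yield a constant near $7$; the gap between the nearest and the farthest back-neighbours of $p_{i+1}$ is exactly what the paper exploits to reach $5$ (its case ii) bounds $b_1-p_{i+1}\le p_{i+2}-p_i$ and $p_{i+1}-p_l\le 4(p_{i+1}-p_i)$ for the \emph{nearest} such neighbours). Finally, even for a weaker constant, the quantitative core of your argument—the ``coupled growth'' charging—is only announced as an expectation, whereas this is where all the work lies: the paper does it via blocker chains (\cref{def:greedy-blocker}), the inequalities $b_k-b_{k-1}\le p_{i+1}-p_i$ and $b_k-p_{i+1}\le b_{k+1}-b_{k-1}$, and the contradiction argument showing at most one blocker is ``long'' ($c=1$).
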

\begin{proof}
Let  $G=(P,E)$ be the maximal $1$-PPB spanner constructed by \Cref{alg:greedy-1page-1dim}. 
For every $p_i,p_{i+2}\in P$ with $i\leq n-2$, we show $\odil(p_i,p_{i+2})\leq 5$.  Due to \Cref{theo:odil-1D-1sided}, this guarantees that $G$ is a $5$-spanner. 

If $(p_{i+2},p_i)\in E$, it holds that $\odil(p_i,p_{i+2})=1$. 

For $(p_{i+2},p_i)\notin E$, let $\{b_1,p_{i+1}\}$  be the shortest edge that blocks $(p_{i+2},p_i)$. 
We show the case that $b_1$ is right of $p_{i+2}$ (and therefore $i\leq n-3$). The other case is analogous.
Due to maximality, $b_1$ is the leftmost point with a back edge to $p_{i+1}$. 
Consider \Cref{fig:1D-alg-greedy-i-i+2-blocked}.
Since $(b_1,p_{i+1})$ blocks $(p_{i+2},p_i)$, it holds that
\begin{equation}\label{eq:greedy-choice-i+1-k}
b_1-p_{i+1}\leq p_{i+2}-p_{i}.
\end{equation} 

\begin{figure}[ht]
\centering
\begin{tikzpicture}[graph,directed, node distance=4em, every node/.append style={minimum width=3.5em}]
	\node[] (i) {$p_{i}$};
	\node[right= of i] (i+1) {$p_{i+1}$};
	\node[right= of i+1] (i+2) {$p_{i+2}$};
	\node[right= of i+2] (k) {$b_1$};
	
	\path[]
	(i) edge (i+1)
	(i+1) edge (i+2)
	(i+2) edge[dotted] (k)
	(k) edge[bend right] (i+1)
	(i+2) edge[bend right, firstColor, dashed] (i)
	;
\end{tikzpicture}
\caption{$(b_1,p_{i+1})$ blocks $(p_{i+2},p_i)$}
\label{fig:1D-alg-greedy-i-i+2-blocked}
\end{figure}

In the following, we determine the leftmost and the rightmost point of $C_G(p_i,p_{i+2})$ and their distances to $p_i$ and $p_{i+1}$.

Let $b_1,\ldots, b_j$ be the points with a back edge to $p_{i+1}$ ordered from left to right. 
So, $b_{j}$ is the rightmost point with a back edge to $p_{i+1}$. 
Due to maximality this point exists, otherwise $b_j = p_{i+1}$.    

We call these points a ``sequence of blockers'', since for each $2\leq k \leq j$ the edge $(b_k,p_{i+1})$ blocks $(b_{k-1},p_{i})$ (see \Cref{fig:1D-alg-greedy-blockerchain,fig:1D-alg-greedy-distance-blockertuple}). Therefore, it holds 
\begin{equation*}
b_k - p_{i+1}\leq b_{k-1}- p_{i},
\end{equation*} 
which bounds their distance by
\begin{equation}\label{eq:1D-greedy-choice-distance-blockertuple}
b_k- b_{k-1}\leq p_{i+1}- p_{i}.
\end{equation}    

\begin{figure}[ht]
\centering
\begin{tikzpicture}[graph,directed, node distance=4em, every node/.append style={minimum width=3.5em}]
	\node[] (i+1) {$p_{i+1}$};
	\node[right= of i+1] (i+2) {$p_{i+2}$};
	\node[right= of i+2] (k) {$b_1$};
	\node[right= of k] (k+1) {$b_{j-1}$};
	\node[right= of k+1] (kj) {$b_{j}$};
	
	\path[]
	(i+1) edge (i+2)
	(i+2) edge [dotted]  (k)
	(k) edge  [dotted]  (k+1)
	(k+1) edge [dotted] (kj)
	(k) edge[bend right] (i+1)
	(k+1) edge[bend right] (i+1)
	(kj) edge[bend right] (i+1)
	(kj) edge[bend right] (k+1)
	;
\end{tikzpicture}
\caption{Let $b_1,\ldots, b_j$ be the points with a back edge to $p_{i+1}$.}
\label{fig:1D-alg-greedy-blockerchain}
\end{figure}

\begin{figure}[ht]
\centering
\begin{tikzpicture}[graph,directed, node distance=4em, every node/.append style={minimum width=3.5em}]
\node[] (i) {$p_{i}$};
\node[right= of i] (i+1) {$p_{i+1}$};
\node[right=6em of i+1] (kj-1) {$b_{k-1}$};
\node[right=5em of kj-1] (kj) {$b_{k}$};

\path[]
(kj-1) edge[bend right, firstColor, dashed] (i)
(i) edge (i+1)
(i+1) edge[dotted] (kj-1)
(kj-1) edge[bend right] (i+1)
(kj) edge[bend right,] (i+1)     
(kj-1) edge [dotted] (kj)
;

\path[]
($(kj-1.south)-(0,.5)$) edge[-|] node[below] {$\overset{\text{eq.~\ref{eq:1D-greedy-choice-distance-blockertuple}}}{\geq } b_{k}-p_{i+1}$} ($(i.south)-(0,.5)$)
($(kj.south)-(0,.5)$) edge[|-|]  node[below] {$\overset{\text{eq.~\ref{eq:1D-greedy-choice-distance-blockertuple}}}{\leq} p_{i+1}-p_i$} ($(kj-1.south)-(0,.5)$)
;
\end{tikzpicture}
\caption{$(b_{k},p_{i+1})$ blocks $(b_{k-1},p_{i})$ for  $2\leq k\leq j$}
\label{fig:1D-alg-greedy-distance-blockertuple}
\end{figure}

Further,  for $2\leq k\leq j-1$, the greedy algorithm added $(b_k,p_{i+1})$ to $E$ and discarded $(b_{k+1},b_{k-1})$ (compare to \Cref{fig:1D-alg-greedy-distance-blockertriple}). Therefore, their distance is bounded by
\begin{equation}\label{eq:1D-greedy-choice-distance-blockertriple}
b_k- p_{i+1} \leq b_{k+1} - b_{k-1}.
\end{equation}

\begin{figure}[ht]
\centering
\begin{tikzpicture}[graph,directed, node distance=4em, every node/.append style={minimum width=3.5em}]
\node[] at (0,0) (i+1) {$p_{i+1}$};
\node at (2,0) (kj-1) {$b_{k-1}$};
\node at (4,0) (kj) {$b_k$};
\node at (6,0) (kj+1) {$b_{k+1}$};

\path[]
(kj+1) edge[bend right, firstColor, dashed] (kj-1)
(i+1) edge[dotted] (kj-1)
(kj-1) edge[dotted] (kj)
(kj) edge[dotted] (kj+1)
(kj-1) edge[bend right] (i+1)
(kj) edge[bend right] (i+1)     
(kj+1) edge[bend right] (i+1)      
;

\path[]
($(i+1.south)-(0,.5)$) edge[|-|] node[below] {$\overset{\text{eq.~\ref{eq:1D-greedy-choice-distance-blockertriple}}}{\leq} b_{k+1}-p_{i+1}$} ($(kj.south)-(0,.5)$)
;
\end{tikzpicture}
\caption{Since $(b_k,p_{i+1})\in E$, then  $(b_{k+1},b_{k-1})\notin E$ for $2\leq k\leq j-1$}
\label{fig:1D-alg-greedy-distance-blockertriple}
\end{figure}

Now, we use these equations to bound the length of the shortest closed walk in $G$ containing $p_i$ and $p_{i+2}$. 
Let $p_l$ be the leftmost and $p_r$ the rightmost point  of $C_G(p_i,p_{i+2})$.
We do a case distinction on \ref{case:1D-alg-greedy-1-arc} $l= i$, and \ref{case:1D-alg-greedy-2-arc} $l<i$.

\begin{enumerate}[i)]
\item\label{case:1D-alg-greedy-1-arc} Due to maximality, for $l= i$, it holds that $p_r=b_{j}$ (see \Cref{fig:1D-alg-greedy-1-arc}). 

Therefore, we have
\[ |C_G(p_i,p_{i+2})| =  (b_{j}-p_i)\cdot 2 = (b_{j} - b_{j-1} + b_{j-1} - p_{i+1} +p_{i+1} -  p_i) \cdot 2.\]

If $j-2\geq 1$, using \Cref{eq:1D-greedy-choice-distance-blockertriple}, gives us
\[  b_{j-1} - p_{i+1} \leq b_{j} - b_{j-2}  \overset{\text{eq.~\ref{eq:1D-greedy-choice-distance-blockertuple}}}{\leq}  2 (p_{i+1} -  p_i) \leq 2 (p_{i+2} -  p_i) .\]

Otherwise, it holds $j<3$, then we have 
\[ |C_G(p_i,p_{i+2})| =  (b_{j}-p_i)\cdot 2 \overset{\text{eq.~\ref{eq:greedy-choice-i+1-k}, \ref{eq:1D-greedy-choice-distance-blockertuple}}}{\leq} 2 (p_{i+2} -  p_i)  .\]

This provides the dilation 

\begin{align*}
\odil(p_i,p_{i+2})&=\frac{|C_G(p_i,p_{i+2})|}{|\Delta(p_i,p_{i+2})|}  =\frac{(b_{j}-p_i)\cdot 2}{(p_{i+2}-p_i)\cdot 2}\\
& \leq    \frac{b_{j} - b_{j-1} +2 (p_{i+2} -  p_i) +p_{i+1} -  p_i}{p_{i+2}-p_i}\\
& \overset{\text{eq.~\ref{eq:1D-greedy-choice-distance-blockertuple}}}{\leq} \frac{2 (p_{i+2} -  p_i) +2(p_{i+1} -  p_i)}{p_{i+2}-p_i}  \leq 4
\end{align*}     
\begin{figure}[ht]
\centering
\begin{tikzpicture}[graph,directed, node distance=4em, every node/.append style={minimum width=3.5em}]
	\node[] (i) {$p_{i}$};
	\node[right= of i] (i+1) {$p_{i+1}$};
	\node[right= of i+1] (kj-1) {$b_{j-2}$};
	\node[right= of kj-1] (kj) {$b_{j-1}$};
	\node[right= of kj] (kj+1) {$b_{j}$};
	
	\path[]
	(i) edge (i+1)
	(i+1) edge [dotted] (kj-1)
	(kj-1) edge[dotted] (kj)
	(kj) edge[dotted]  (kj+1)
	(kj-1) edge[bend right] (i+1)
	(kj) edge[bend right] (i+1)
	(kj+1) edge[bend right] (i+1)
	(kj+1) edge[bend right] (i)
	;
	
	\path[node distance=5em]
	($(i+1.south)-(0,1em)$) edge[|-|]  node[below] {$\overset{\text{eq.~\ref{eq:1D-greedy-choice-distance-blockertriple}}}{\leq} b_j - b_{j-2}\leq 2 (p_{i+1} -  p_i) $}  ($(kj.south)-(0,1em)$)
	($(kj.south)-(0,1em)$) edge[-|]  node[below] {$\overset{\text{eq.~\ref{eq:1D-greedy-choice-distance-blockertuple}}}{\leq} p_{i+1} - p_i$}  ($(kj+1.south)-(0,1em)$)       
	;
\end{tikzpicture}
\caption{If \ref{case:1D-alg-greedy-1-arc} $l=i$, then $p_r=b_{j}$ and $(b_{j},p_{i})\in C_G(p_i,p_{i+2})$}
\label{fig:1D-alg-greedy-1-arc}
\end{figure}

\item\label{case:1D-alg-greedy-2-arc} If   $l<i$  and $G$ is maximal, then $ (p_{i+1},p_l)\in C_G(p_i,p_{i+2})$. In this case, since $b_1$ is the leftmost point with a back edge to $p_{i+1}$, it holds that $p_r=b_1$ (see \Cref{fig:1D-alg-greedy-2-arc}).    
By definition, $(p_{i+1},p_l)$ is the blocker of $(b_{j},p_i)$. Therefore,
\begin{equation}\label{eq:greedy-choice-i+1-l}
p_{i+1} - p_l \leq b_{j} - p_i \leq 4(p_{i+2}-p_i),
\end{equation}
where the last step follows from case~\ref{case:1D-alg-greedy-1-arc}. Therefore, the dilation is
\begin{align*}
\odil(p_i,p_{i+2})=&\frac{|C_G(p_i,p_{i+2})|}{|\Delta(p_i,p_{i+2})|}=\frac{(b_1 - p_l)\cdot 2}{(p_{i+2}-p_i)\cdot 2} = \frac{b_1 - p_{i+1} +p_{i+1} -  p_l}{p_{i+2}-p_i} \\   
&\overset{\text{eq.~\ref{eq:greedy-choice-i+1-k},\ref{eq:greedy-choice-i+1-l}}}{\leq}   \frac{p_{i+1}-p_i + 4(p_{i+2}-p_i)}{p_{i+2}-p_i} \leq 5. 
\end{align*}    
\begin{figure}[ht]
\centering
\begin{tikzpicture}[graph,directed, node distance=4em, every node/.append style={minimum width=3.5em}]
	\node[] (i) {$p_{i}$};
	\node[left= of i,xshift=-1em] (l) {$p_{l}$};
	\node[right= of i] (i+1) {$p_{i+1}$};
	\node[right=of i+1] (k) {$b_1$};
	\node[right= of k] (kj+c) {$b_{j}$};
	
	\path[]
	(kj+c) edge[bend right, firstColor, dashed] (i)
	;
	
	\path[]
	(l) edge[dotted] (i)
	(i) edge (i+1)
	(i+1) edge[dotted] (k)
	(k) edge[dotted] (kj+c)
	(kj+c) edge[bend right] (i+1) 
	;
	
	\path[every edge/.append style={bend right}]
	(k) edge (i+1)
	(i+1) edge [] (l)
	(k) edge [] (i+1)
	;
	
	\path[]
	($(k.south)-(0,1.5em)$) edge[|-|]  node[below] {$\overset{\text{eq.~\ref{eq:greedy-choice-i+1-k}}}{\leq } p_{i+1}-p_{i}$} ($(i+1.south)-(0,1.5em)$)
	($(l.south)-(0,1.5em)$) edge[|-]  node[below] {$\overset{\text{eq.~\ref{eq:greedy-choice-i+1-l}}}{\leq } b_{j}-p_{i} \leq 4(p_{i+2}-p_i)$} ($(i+1.south)-(0,1.5em)$)
	;
	
\end{tikzpicture}
\caption{If \ref{case:1D-alg-greedy-2-arc}~$l<i$, then $p_r=b_1$ and $(b_1,p_{i+1}),(p_{i+1},p_l)\in C_G(p_i,p_{i+2})$. Note that $(p_{i+1},p_l)$ blocks $(b_{j+c},p_i)$.}
\label{fig:1D-alg-greedy-2-arc}
\end{figure}
\end{enumerate}

By maintaining the points in a doubly linked list and the relevant distances in a priority queue the runtime of \Cref{alg:greedy-1page-1dim} is $\bigo (n \log n)$ where $n$ is the number of points.
\end{proof}


\begin{figure}
\centering
\begin{tikzpicture}[graph, directed]
\node[] (1) {$p_1$};
\node[right= of 1] (2) {$p_2$};
\node[right= of 2] (3) {$p_3$};
\node[right= of 3] (4) {$p_4$};
\node[right= of 4] (5) {$p_5$};
\node[right= of 5] (6) {$p_6$};
\node[right= of 6] (7) {$p_7$};

\draw (1) edge node[below] {$3-5\varepsilon$} (2);
\draw (2) edge node[below] {$1$} (3);
\draw (3) edge node[below] {$\varepsilon$} (4);
\draw (4) edge node[below] {$1-3\varepsilon$} (5);
\draw (5) edge node[below] {$1-\varepsilon$} (6);
\draw (6) edge node[below] {$1-\varepsilon$} (7);

\path[every edge/.append style={bend right}]
(3) edge (1)
(5) edge (3)
(6) edge (3)
(7) edge (3)
;
\end{tikzpicture}
\caption{Example of a greedy spanner with dilation $t=\odil(p_2,p_4)=\frac{5-7\varepsilon}{1+\varepsilon}<5$}
\label{fig:1PPB-greedy-worstcase-7}
\end{figure}

\Cref{fig:1PPB-greedy-worstcase-7} shows a point set $P$ and its greedily constructed spanner $G$.  The oriented dilation of $G$ is $t= \frac{5-7\varepsilon}{1+\varepsilon}$. 
Thus, $G$ is a $5$-spanner for $P$. 

\begin{figure}
\centering
\begin{tikzpicture}[graph, directed]
\node[] (1) {$p_1$};
\node[right= of 1] (2) {$p_2$};
\node[right= of 2] (3) {$p_3$};
\node[right= of 3] (4) {$p_4$};
\node[right= of 4] (5) {$p_5$};
\node[right= of 5] (6) {$p_6$};
\node[right= of 6] (7) {$p_7$};

\draw (1) edge node[below] {$3-5\varepsilon$} (2);
\draw (2) edge node[below] {$1$} (3);
\draw (3) edge node[below] {$\varepsilon$} (4);
\draw (4) edge node[below] {$1-3\varepsilon$} (5);
\draw (5) edge node[below] {$1-\varepsilon$} (6);
\draw (6) edge node[below] {$1-\varepsilon$} (7);

\path[every edge/.append style={bend right}]
(5) edge (1)
(5) edge (3)
(5) edge (2)
(7) edge (5)
;
\end{tikzpicture}
\caption{A spanner for the point set of \Cref{fig:1PPB-greedy-worstcase-7} with dilation $t=\odil(p_2,p_4)=\frac{2-2\varepsilon}{1+\varepsilon}<2$.}
\label{fig:1PPB-greedy-worstcase-7-nonopt}
\end{figure}

However, \Cref{fig:1PPB-greedy-worstcase-7-nonopt} shows a $t$-spanner with $t=\frac{2-2\varepsilon}{1+\varepsilon}<2$ for the same point set.  
Therefore, \Cref{alg:greedy-1page-1dim} does not return the minimum spanner for every given one-dimensional point set.\\


\subsubsection{Minimum One-Page Plane  Spanner}

Next, we focus on minimum dilation one-page plane spanners, which, unlike non-plane oriented spanners for one-dimensional point sets, do not have a minimum dilation of $1$ (\Cref{theo:bound-no-1-1d-1-2-page-plane}, contrasting with \Cref{theo:1D-1-spanner}).

Due to \Cref{theo:baseline}, we restrict our attention to $1$-PPB spanners. The following dynamic program calculates a maximal $1$-PPB spanner with minimal dilation. 

\subparagraph{Idea.} 
Due to one-page-planarity, a point set $\{p_l,\ldots,p_r\}$ can be separated into two (almost) independently subproblems, $\{p_l,\ldots,p_k\}$ and $\{p_k,\ldots,p_r\}$, by adding the edges $(p_r,p_k)$ and $(p_k,p_l)$ to a split point $p_k$ between $p_l$ and $p_r$ (consider \Cref{fig:opt-alg-split-mid}).  We test all candidates for $p_k$ to minimise the dilation $t=\max\{t',t'',\odil(p_{k-1},p_{k+1})\}$ (\Cref{theo:odil-1D-1sided})  for $\{p_l,\ldots,p_r\}$ where $t'$ is the minimum dilation for $\{p_l,\ldots,p_k\}$ and $t''$ is the minimum dilation for $\{p_k,\ldots,p_r\}$.   
To compute $ \odil(p_{k-1},p_{k+1})$, we need to know the length of $C_G(p_{k-1},p_{k+1})$ which is, since $(p_r,p_k),(p_k,p_l) \in E$,  the union of $C_G(p_k,p_{k+1})$ and $C_G(p_{k-1},p_k)$. By the construction holds $|C_G(p_{k-1},p_k)|=2(p_k - p_{k_r})$, where $p_{k_r}$ is the rightmost point with $(p_k,p_{k_r})\in E$, and analogously, we have $|C_G(p_k,p_{k+1})|=2(p_{k_l}-p_k)$, where $p_{k_l}$ is the leftmost point with $(p_{k_l},p_k)\in E$.  
To include those two edges in the dynamic program, we consider the subproblem of computing $\odil(l,l',r',r)$, which denotes the oriented dilation of the subgraph on $\{p_l,\ldots,p_r\}$  under the assumption that $(p_{l'},p_l)\in C_G(p_l,p_{l+1})$ and $(p_{r},p_{r'})\in C_G(p_{r-1},p_{r})$.

\subparagraph{Recursion.} 
We add $(p_r,p_l)$ and distinct cases by the indices $l$,$l'$,$r'$ and $r$.
\begin{enumerate}[a)]
\item For $l'<r$ and $r'>l$, we test every point $p_k$ with  $l+2\leq k\leq r-2$ and do recursively on $\{p_l,\ldots,p_k\}$ and $\{p_k,\ldots,p_r\}$ (compare to \Cref{fig:opt-alg-split-mid}).
Therefore, we test every point $p_{k_r}$ with $l\leq k_r\leq k-2$ as leftmost point of $C_G(p_{k-1},p_{k})$ and every point $p_{k_l}$ with $k+2\leq k_l\leq r $ as rightmost point of $C_G(p_{k},p_{k+1})$. We choose the combination $k_r,k,k_l$ that minimises the dilation 
\[\odil(l,l',r',r)=
\underset{k_r,k,k_l}{\min}\max\{\odil(l,l',k_r,k), \odil(k,k_l,r',r),\frac{p_{k_l}-p_{k_r}}{p_{k+1}-p_{k-1}}\}. \]

\begin{figure}
\centering
\begin{subfigure}{\linewidth}
\begin{tikzpicture}[graph, every node/.append style={minimum size=3em}, directed]
	\node[] (l) {$p_{l}$};
	\node[right= of l] (l') {$p_{l'}$};
	\node[right=2em of l'] (kr) {$p_{k_r}$};
	\node[right= of kr] (k) {$p_{k}$};
	\node[right= of k] (kl) {$p_{k_l}$};
	\node[right=2em of kl] (r') {$p_{r'}$};
	\node[right= of r'] (r) {$p_{r}$};
	
	\path[]
	(l) edge[dotted] (l')
	(l') edge[dotted] (kr)
	(kr) edge[dotted] (k)
	(k) edge[dotted] (kl)
	(kl) edge[dotted] (r')
	(r') edge[dotted] (r)           
	; 
	; 
	
	\draw (l') edge[bend right] (l);
	\draw (kl) edge[bend right] (k);
	\draw (k) edge[bend right] (kr);
	\draw (r) edge[bend right] (r');
	\draw (r) edge[bend right] (k);
	\draw (k) edge[bend right] (l);
	\draw (r) edge[bend right] (l);
\end{tikzpicture}
\caption{Split at $p_k$, recursion on both sides}
\label{fig:opt-alg-split-mid}
\end{subfigure}
\begin{subfigure}{\linewidth}
\centering
\begin{tikzpicture}[graph, every node/.append style={minimum size=3em}, directed]
\node[] (l) {$p_{l=r'}$};
\node[right= of l] (l') {$p_{l'}$};
\node[right= of l'] (kr) {$p_{k_r}$};
\node[right= of kr] (r-1) {$p_{r-1}$};
\node[right= of r-1] (r) {$p_{r}$};

\path[]
(l) edge[dotted] (l')
(l') edge[dotted] (kr)
(kr) edge[dotted] (r-1)
(r-1) edge (r)
; 

\draw (l') edge[bend right] (l);
\draw (r-1) edge[bend right] (kr);
\draw (r-1) edge[bend right] (l);
\draw (r) edge[bend right] (l);
\end{tikzpicture}
\caption{Split at $p_{r-1}$, just left recursion}
\label{fig:opt-alg-split-right}
\end{subfigure}
\begin{subfigure}{\linewidth}
\centering
\begin{tikzpicture}[graph, every node/.append style={minimum size=3em}, directed]
\node[] (l) {$p_{l}$};
\node[right= of l ] (l+1) {$p_{l+1}$};
\node[right= of l+1] (kl) {$p_{k_l}$};
\node[right= of kl] (r') {$p_{r'}$};
\node[right= of r'] (r) {$p_{r=l'}$};

\path[]
(l) edge (l+1)
(l+1) edge[dotted] (kl)
(kl) edge[dotted] (r')
(r') edge[dotted] (r)           
; 

\draw (r) edge[bend right] (r');
\draw (kl) edge[bend right] (l+1);
\draw (r) edge[bend right] (l+1);
\draw (r) edge[bend right] (l);
\end{tikzpicture}
\caption{Split at $p_{l+1}$, just right recursion}
\label{fig:opt-alg-split-left}
\end{subfigure}
\caption{Visualisation of the recursion of the dynamic program}
\end{figure}

\item For $r'=l$ (and $l'<r$), maximality implies that $(p_{r-1},p_{l})$ is in the back edge set (compare to \Cref{fig:opt-alg-split-right}). Therefore, we consider recursively $\{p_l,\ldots, p_{r-1}\}$. We choose the point $p_{k_r}$ as leftmost point for $C_G(p_{r-2},p_{r-1})$ which minimises the dilation  \[\odil(l,l',r',r)=\underset{l\leq k_r\leq r-3}{\min} \max\{ \odil(l,l',k_r,r-1), \frac{p_{r}-p_{l}}{p_{r}-p_{r-2}}\} .\]

\item Analogous, for  $l'=r$ (and $r'>l$), we consider recursively $\{p_{l+1},\ldots, p_r\}$ (compare to \Cref{fig:opt-alg-split-left}). We test every point $p_{k_l}$ as rightmost point for $C_G(p_{l+1},p_{l+2})$  to minimise the dilation  \[\odil(l,l',r',r)=\underset{l+3\leq k_l\leq r}{\min} \max\{ \odil(l+1,k_l,r',r), \frac{p_{r}-p_{l}}{p_{l+2}-p_{l}}\} .\]
\end{enumerate}

\subparagraph{Termination.} 
Each recursion terminates with a forbidden subproblem $(l,l',r',r)$ or with a set of three points.
Invalid combinations are
\begin{enumerate}[i)]
\item  $l'<l+2$, $r'>r-2$ or $l>r$ which contradicts the definition of the indices,
\item $r< l+2$ which implies $|P|< 3$, or
\item $l'<r$, $r'>l$ and $l'>r'$ which contradicts planarity.
\end{enumerate}
If $l'=r$ and $r'=l$, due to maximality, it holds that $|P|=3$. The only (and optimal) solution for a set of three points is the triangle $\Delta p_lp_{l+1}p_{r=l+2}$.

\subparagraph{Runtime and Space.} 
The dynamic program goes through all $\bigo(n^7)$ combinations of $l,l',k_r,k,k_l,r'$ and $r$. By storing the dilation of each subproblem $(l,l',r',r)$, choosing the optimal combination of $k,k_r$ and $k_l$ does not take additional time. By back tracking, the edge set of a minimum $1$-PPB spanner is calculated in $\bigo(n^7)$ time and  $\bigo(n^4)$ space.\\

We conclude that this dynamic program returns an optimal $1$-PPB graph. 
\begin{theorem}[Optimal $1$-PPB]
\label{theo:opt-alg-1PPB}
Given a one-dimensional point set $P$ of size $n$, a minimum one-page plane oriented spanner for $P$ can be calculated in $\bigo(n^7)$ time.
\end{theorem}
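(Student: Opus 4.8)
The plan is to split the argument into a correctness part and a running-time part, relying on the two supporting lemmas that precede the statement. Correctness follows by combining two facts. By \cref{theo:baseline}, any one-page plane oriented spanner can be reoriented into a $1$-PPB graph without increasing its dilation, and since adding further back edges to make the graph maximal can only shorten shortest cycles, the minimum one-page plane dilation is already attained by some maximal $1$-PPB graph. \cref{theo:opt-alg-1PPB-recursion-formula} then certifies that each table entry $\textsc{oE}(l,l',r',r)$ stores the back edge set of a minimum maximal $1$-PPB spanner for $\{p_l,\dots,p_r\}$ subject to the boundary constraints encoded by $l'$ and $r'$. Hence evaluating $\textsc{oE}(1,l',r',n)$ and minimising over all valid pairs $(l',r')$ yields a minimum maximal $1$-PPB spanner for all of $P$, and adding the baseline produces the desired graph. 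This reduces the theorem to a running-time analysis of \cref{alg:opt-1PPB}.

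For the running time I would count three multiplicative factors. First, the table is indexed by four parameters each ranging over $\{1,\dots,n\}$, giving $\bigo(n^4)$ entries to fill. Second, the dominating recursive case is~\ref{guess-kr-k-kl}, in which filling one entry requires searching over the split point $p_k$ together with the two boundary parameters $k_r$ and $k_l$; each ranges over $\bigo(n)$ admissible values, so there are $\bigo(n^3)$ candidate combinations per entry (the one-sided recursive cases (v) and (vi) each search over a single parameter and are cheaper). Third, for every candidate the algorithm assembles $E'$ from two previously computed subtable entries and the single edge $(p_r,p_l)$, and evaluates $\textsc{odil}(E')$. By \cref{theo:alg-odil-1PPB} this costs $\bigo(n)$ time once the sorted in- and out-neighbourhoods are available; since $E'$ is an oriented triangulation with only $\bigo(n)$ edges and the vertices are indexed by their left-to-right position, these sorted adjacency lists can be built by bucketing the edges at their endpoints in $\bigo(n)$ time, so the preprocessing does not raise the per-evaluation cost. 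Multiplying the three factors gives $\bigo(n^4)\cdot\bigo(n^3)\cdot\bigo(n)=\bigo(n^8)$, and the final minimisation over $(l',r')$ contributes only $\bigo(n^3)$ and is absorbed.

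The step I expect to be the main obstacle is making each of the $\bigo(n^7)$ candidate dilation evaluations genuinely run in $\bigo(n)$ rather than $\bigo(n\log n)$: naively re-sorting neighbourhoods per candidate would introduce a logarithmic factor and break the bound. The observation that resolves this is that the split-point decomposition of case~\ref{guess-kr-k-kl} aligns precisely with the cross-boundary pair $p_{k-1},p_{k+1}$ isolated by \cref{theo:odil-1D-1sided}: the combined cycle decomposes as $C_G(p_{k-1},p_k)\cup C_G(p_k,p_{k+1})$, so $\textsc{odil}$ only needs local neighbourhood information, exactly what \cref{alg:odil-1PPB} inspects. Combined with the linear-time bucketed construction of the sorted adjacency lists, this keeps each evaluation at $\bigo(n)$ and establishes the $\bigo(n^8)$ bound.
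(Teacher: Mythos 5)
Your proof is correct and follows essentially the same route as the paper: correctness via \cref{theo:baseline} plus maximality and \cref{theo:opt-alg-1PPB-recursion-formula}, and the running time as the product $\bigo(n^4)$ table cells $\times$ $\bigo(n^3)$ choices of $(k_r,k,k_l)$ in case~\ref{guess-kr-k-kl} $\times$ $\bigo(n)$ per dilation evaluation via \cref{theo:alg-odil-1PPB}. Your additional remark on building the sorted adjacency lists by bucketing (to meet the precondition of \cref{theo:alg-odil-1PPB} without a logarithmic overhead) is a small refinement the paper leaves implicit, not a different approach.
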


\section{Two-Dimensional Point Sets}\label{sec:2D}

In the one-dimensional case, we have seen that a $1$-spanner exists for every point set, though it is not plane. In two dimensions, the complete bi-directed graph is always a directed $1$-spanner. However, this is not the case for oriented dilation. There are point sets for which no $1$-spanner exists, even more, no $1.465$-spanner exists.

\begin{observation}
\label{theo:2D-upperlowerbound}
There are two-dimensional point sets for which no oriented $t$-spanner exists for $t<2\sqrt{3}-2 < 1.465$.
\end{observation}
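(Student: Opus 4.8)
The plan is to exhibit a small, highly symmetric witness point set and argue that \emph{every} orientation of the complete graph on it leaves some pair with oriented dilation at least $2\sqrt{3}-2$. Concretely, I would take the three vertices $A,B,C$ of a unit equilateral triangle together with its centroid $O$ (so $|AB|=|BC|=|CA|=1$ and $|OA|=|OB|=|OC|=1/\sqrt{3}$). The first step is short geometric bookkeeping: for each of the six pairs the optimal oriented cycle $\Delta$ is a \emph{centre triangle} $ABO$, $BCO$, or $CAO$, each of perimeter $1+2/\sqrt{3}$ — for an outer pair the centroid beats the opposite vertex as the cheapest third point ($2/\sqrt{3}<2$), and for a spoke pair an adjacent outer vertex is cheapest. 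Hence all optimal denominators equal $1+2/\sqrt{3}$. I would then list the only cycles available in a four-point oriented graph: the three centre triangles (length $1+2/\sqrt{3}$), the outer triangle $ABC$ (length $3$), and the three Hamiltonian $4$-cycles (each of length $2+2/\sqrt{3}$). The crucial arithmetic fact is that a $4$-cycle yields dilation exactly $(2+2/\sqrt{3})/(1+2/\sqrt{3})=2\sqrt{3}-2$, so it suffices to force some pair to have neither a directed centre triangle nor a directed outer triangle through it.

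The combinatorial heart is a parity observation about the three spokes $OA,OB,OC$. Orienting the complete graph is choosing a tournament; writing each spoke as ``into $O$'' or ``out of $O$'', a centre triangle $XYO$ closes into a directed cycle if and only if its two spokes point oppositely at $O$. Among three binary spoke-orientations the number of oppositely-pointing pairs is always even, so the number of directed centre triangles is $0$ or $2$ — never all three. I would then split on these two cases. If no centre triangle is directed, then a spoke pair such as $(A,O)$ has no directed triangle through it at all (its only triangles are $ABO$ and $CAO$), so its shortest oriented cycle is a Hamiltonian $4$-cycle, giving $\odil(A,O)\ge 2\sqrt{3}-2$.

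If exactly two centre triangles are directed, let $ABO$ be the undirected one, so the spokes at $A$ and $B$ agree and the spoke at $C$ differs. Here I would show that the outer triangle is forced to be transitive: directedness of $BCO$ and $CAO$ pins down the edges between $C$ and each of $A,B$ so that $C$ is simultaneously a source (or simultaneously a sink) of the outer triangle, hence $ABC$ is not a directed cycle. Consequently the pair $(A,B)$ has neither a directed centre triangle $ABO$ nor a directed outer triangle $ABC$ through it, and its shortest oriented cycle is again a $4$-cycle, so $\odil(A,B)\ge 2\sqrt{3}-2$. In both cases some pair realises the bound, proving the observation.

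The main obstacle I anticipate is the second case: carefully checking that two directed centre triangles genuinely force the third outer edge into a transitive pattern (a brief edge-by-edge argument over the two spoke sign-patterns), and being explicit that whenever no suitable directed cycle exists the dilation is infinite, so the inequality $\ge 2\sqrt{3}-2$ holds \emph{a fortiori}. The geometric tie-checking (that the centroid, not an outer vertex, is the cheapest third point for outer pairs) is routine but should be stated to justify the common denominator $1+2/\sqrt{3}$.
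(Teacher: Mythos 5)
Your overall strategy is sound and genuinely different from the paper's proof: the paper takes the same witness configuration (equilateral triangle plus centre), reduces to tournaments on $K_4$ via the remark that adding edges can only decrease dilation, and then simply enumerates the three orientations of $K_4$ up to rotation and mirroring, tabulating the dilation of every pair. You instead give a structural argument (parity of spoke orientations at $O$ plus forced transitivity of the outer triangle). Your geometric bookkeeping is correct, as is the key identity $(2+2/\sqrt{3})/(1+2/\sqrt{3})=2\sqrt{3}-2$, and your case~0 and case~2 arguments are individually valid. However, the combinatorial heart of your proof contains a concrete error.

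The claim that a centre triangle $XYO$ ``closes into a directed cycle if and only if its two spokes point oppositely at $O$'' is false: oppositely pointing spokes are necessary but not sufficient, since the outer edge must also be oriented consistently (given $X\to O\to Y$, the triangle is a directed cycle only if the edge $Y\to X$ is present). Consequently your conclusion that the number of directed centre triangles ``is $0$ or $2$ --- never all three'' is wrong in its first part: the count can be exactly $1$. For instance, take spokes $A\to O$, $B\to O$, $O\to C$ together with edges $C\to A$ and $B\to C$; then $CAO$ is a directed cycle but $BCO$ and $ABO$ are not. Only the ``never all three'' conclusion survives, because it uses just the (correct) necessity direction. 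Your two-case analysis therefore has a hole: a tournament with exactly one directed centre triangle is covered by neither case. The hole is easy to patch, since a single directed centre triangle contains only two of the three spoke pairs, so some spoke pair (in the example above, $(B,O)$) lies on no directed triangle at all, and your case-$0$ argument applies verbatim, giving $\odil(B,O)\ge 2\sqrt{3}-2$. A cleaner organisation that avoids the false lemma entirely: either some spoke pair lies on no directed centre triangle (done, as in your first case), or all three spoke pairs are covered; since each centre triangle covers exactly two spoke pairs and, by the correct ``only if'' direction plus parity, at most two centre triangles can be directed, exactly two must be directed, and your second case finishes the proof. One further presentational point: to conclude the observation for \emph{all} oriented spanners rather than only tournaments, you should state the one-line reduction the paper makes explicit, namely that any oriented graph on $P$ extends to a tournament and that adding edges can only shorten shortest oriented cycles, hence can only decrease the dilation.
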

\begin{proof}
We show that $t=2\sqrt{3}-2$  holds for every $t$-spanner for the point set $P$ where $p_1$, $p_2$ and $p_4$ form an equilateral triangle and $p_3$ is its centre. 
Since more edges can only improve the dilation, the minimum spanner must be a complete graph $K_4$. 
Taking into account mirroring and rotation, \Cref{fig:2D-upperlowerbound} lists all orientations for $K_4$. For each orientation, we list the oriented dilation of each pair of points in a table. We see that every graph is a $(2\sqrt{3}-2)$-spanner.
\end{proof}

\begin{figure}
\centering
\begin{minipage}{.45\linewidth}\centering
\includegraphics[page=4]{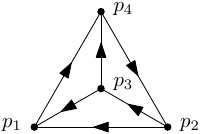}
\\
\begin{tabular}{c | c |  c | c}
$\odil$   &  $p_1$ &  $p_2$ &  $p_3$  \\ \hline \hline
$p_4$ & $1$  &  $2\sqrt{3}-2$ & $1$ \\ \hline
$p_3$ & $1$ & $1$  &  \\\cline{1-3}
$p_2$ & $1$  & \multicolumn{1}{c}{}  & \\\cline{1-2} 
\end{tabular}
\end{minipage}
\begin{minipage}{.45\linewidth}\centering
\includegraphics[page=1]{figures/equiliteral-triangle.pdf}\\
\begin{tabular}{c | c |  c | c}
$\odil$   &  $p_1$ &  $p_2$ &  $p_3$  \\ \hline \hline
$p_4$ & $6\sqrt{3}-9$ &  $1$ & $1$ \\\hline
$p_3$ & $2\sqrt{3}-2$ & $1$  &  \\\cline{1-3}
$p_2$ &  $6\sqrt{3}-9$ & \multicolumn{1}{c}{}  & \\\cline{1-2} 
\end{tabular}
\end{minipage}\\
\begin{minipage}{.45\linewidth}\centering
\includegraphics[page=3]{figures/equiliteral-triangle.pdf}
\\
\begin{tabular}{c | c |  c | c}
$\odil$   &  $p_1$ &  $p_2$ &  $p_3$  \\ \hline \hline
$p_4$ & $1$  &  $2\sqrt{3}-2$ & $1$ \\ \hline
$p_3$ & $1$ & $1$  &  \\\cline{1-3}
$p_2$ & $1$  & \multicolumn{1}{c}{}  & \\\cline{1-2} 
\end{tabular}
\end{minipage}
\begin{minipage}{.45\linewidth}\centering
\includegraphics[page=2]{figures/equiliteral-triangle.pdf}\\
\begin{tabular}{c | c |  c | c}
$\odil$   &  $p_1$ &  $p_2$ &  $p_3$  \\ \hline \hline
$p_4$ & $6\sqrt{3}-9$ &  $1$ & $1$ \\\hline
$p_3$ & $2\sqrt{3}-2$ & $1$  &  \\\cline{1-3}
$p_2$ &  $6\sqrt{3}-9$ & \multicolumn{1}{c}{}  & \\\cline{1-2} 
\end{tabular}
\end{minipage}
\caption{All possible orientations for $K_4$}
\label{fig:2D-upperlowerbound}
\end{figure}

In this section, we consider \emph{consistent} orientations. This means each face of an oriented graph is confined by an oriented cycle. If such an orientation exists for a given graph, we call it an \emph{orientable graph}.

While an oriented complete graph does not lead to a $1$-spanner, orienting its triangles greedily consistently yields a $2$-spanner.
\begin{proposition}
\label{theo:2D-alg-complete-graph}
For every point set an oriented $2$-spanner can be constructed in $\mathcal{O}(n^3\log n)$ time by orienting a complete graph.
\end{proposition}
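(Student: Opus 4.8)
The plan is to analyse the orientation produced by \cref{alg:greedy-complete-graph} and to show that $\odil(p,p')\le 2$ for every pair $p,p'$. First I would record the basic invariant of the algorithm. Since the triangles are processed in order of increasing perimeter, and a triangle is touched only when at most one of its edges is already oriented, every edge that the loop orients is oriented as one side of a directed $3$-cycle. Hence each oriented edge $\{a,b\}$ carries a \emph{witness triangle}: the directed $3$-cycle through $a$ and $b$ that the algorithm formed when it oriented $\{a,b\}$. Because triangles are considered shortest first, an edge oriented before a triangle $\Delta$ is processed has a witness triangle of perimeter at most $|\Delta|$. The orientation is well defined, as we never re-orient an edge (we act only on triangles with $0$ or $1$ oriented edges, and then only on their still-unoriented edges); any edge left untouched by the loop is oriented arbitrarily afterwards, which will not interfere with the argument below.

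Now fix a pair $p,p'$ and let $T=\Delta(p,p')=\Delta_{pp'p''}$ be its optimal cycle. By the characterisation stated just after \cref{def:odilation}, $T$ is the shortest triangle through the edge $\{p,p'\}$, so every triangle containing $\{p,p'\}$ has perimeter at least $|T|$. I would split on how $T$ is oriented in the final graph. If $T$ is a directed $3$-cycle, it is itself an oriented cycle through $p$ and $p'$, so $\odil(p,p')\le|T|/|T|=1$. Otherwise $T$ is oriented transitively; then at least two of its edges were oriented before $T$ was processed, since with at most one pre-oriented edge the algorithm would have completed $T$ into a directed cycle. Each such pre-oriented edge thus has a witness triangle of perimeter at most $|T|$.

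The heart of the proof is this transitive case. If $\{p,p'\}$ is one of the pre-oriented edges, its witness triangle is already a directed cycle through $p$ and $p'$ of perimeter at most $|T|$, giving $\odil(p,p')\le 1$. The remaining situation is that the two pre-oriented edges are $\{p,p''\}$ and $\{p',p''\}$, with witness triangles $A$ and $B$ of perimeter at most $|T|$, while $\{p,p'\}$ is oriented, say, $p\to p'$ (the other direction is symmetric). I would then assemble a directed cycle through $p$ and $p'$ out of the edge $p\to p'$ together with, on the side through $p''$, either the direct edge of $T$ when its orientation is favourable, or a detour supplied by a witness triangle: deleting $\{p,p''\}$ from $A$ leaves a directed path from $p''$ to $p$ of length $|A|-|pp''|$, and deleting $\{p',p''\}$ from $B$ leaves a directed path from $p'$ to $p''$ of length $|B|-|p'p''|$. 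Enumerating the (at most four) transitive orientations of $T$, the worst case produces a simple $5$-cycle $p\to p'\to\cdots\to p''\to\cdots\to p$ of length at most $|pp'|+(|T|-|p'p''|)+(|T|-|pp''|)=|T|+2\,|pp'|$, and the triangle inequality $|pp'|\le|pp''|+|p'p''|$ converts this into a bound of $2|T|$; the other orientations give a $4$-cycle and a strictly better estimate.

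I expect the main obstacle to be exactly this last step: checking that the assembled walk is a \emph{simple} directed cycle and that it genuinely stays within the budget $2|T|$. Simplicity requires ruling out coincidences between the apices of $A$ and $B$ and the vertices of $T$; here one argues that if such an apex coincided with the third vertex of $T$ then $A$ (or $B$) would equal $T$, which is impossible because $A,B$ are directed cycles while $T$ is transitive, and that $A\ne B$ since otherwise the common apex would force a single edge to be oriented in two opposite ways. Once simplicity and the telescoping length bound are established, the three cases together yield $\odil(p,p')\le 2$ for every pair, and since the whole construction orients a complete graph, this proves the proposition.
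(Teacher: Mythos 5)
Your proposal is correct, and at its core it analyses the same algorithm (\cref{alg:greedy-complete-graph}) with the same two ingredients as the paper: every edge oriented by the loop lies on an earlier, consistently oriented \emph{witness} triangle, and the ascending order by perimeter makes every such witness at most as long as $\Delta(p,p')$, which is where the factor $2$ comes from. Where you genuinely diverge is in how the short oriented cycle through $p,p'$ is exhibited in the bad case. The paper never looks at how $\Delta_{pp'p''}$ itself ended up oriented: it takes the two witness triangles of the two pre-oriented edges and concatenates them at their common vertex, obtaining a closed walk of length at most $2\,|\Delta(p,p')|$ that visits the shared vertex twice. You instead enumerate the transitive orientations of $\Delta_{pp'p''}$ and splice the edge $(p,p')$ together with paths cut out of the witness triangles into a \emph{simple} directed $4$- or $5$-cycle, paying for this with the extra bookkeeping $|pp'|+\bigl(|T|-|p'p''|\bigr)+\bigl(|T|-|pp''|\bigr)=|T|+2|pp'|\le 2|T|$ via the triangle inequality, plus the apex-coincidence argument for simplicity. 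The paper's route is shorter and uniform over which two edges were pre-oriented; your route buys rigour on two points the paper glosses over: your cycles are simple, whereas the paper's figure-eight walk is not a cycle under a strict reading of \cref{def:odilation} (it is fine under the closed-walk interpretation the paper implicitly uses elsewhere, e.g.\ in \cref{theo:odil-1D-i+2-i+3}), and you note that the loop can leave edges unoriented (already possible for $K_4$), so a final arbitrary orientation pass is needed for the statement ``orienting a complete graph'' to be literally true. As a small bonus, your sub-case in which $\{p,p'\}$ itself was pre-oriented yields dilation $1$ rather than $2$.
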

\begin{proof}
We prove that \Cref{alg:greedy-complete-graph} computes such a spanner. Consider an arbitrary pair of points $p, p'$. Let $p''$ be a point that forms a triangle of smallest length with $p$ and $p'$. Consider the iteration of the for-loop processing $\Delta_{pp'p''}$. 

If at the beginning of the iteration at most one edge of $\Delta_{pp'p''}$ was oriented, then $\Delta_{pp'p''}$ is oriented consistently, and $\odil(p,p') = 1$. Otherwise, at least two of the edges of $\Delta_{pp'p''}$ were already oriented, and therefore are incident to previously processed, consistently oriented triangles. By concatenating the two triangles, we obtain a closed walk containing $p$ and $p''$, and since both triangles were processed first, their lengths is at most the length of $\Delta_{pp'p''}$, and the closed walk is at most twice as long. Thus, $\odil(p,p') \leq 2$. 

\Cref{alg:greedy-complete-graph} takes $\mathcal{O}(n^3\log n)$ time because of sorting $\binom{n}{3}$ triangles.
\end{proof}

\begin{algorithm}
\caption{Orientation of $K_n$}\label{alg:greedy-complete-graph}
\begin{algorithmic}
\REQUIRE two-dimensional set $P$ of $n$ points
\ENSURE orientation of the complete graph $K_n$ such that $K_n$ is a $2$-spanner
\STATE Sort the triangles in  $D= \{\Delta_{pp'p''}\mid \forall p,p',p''\in P\}$  ascending by lengths \COMMENT{all possible triangles in $K_n$}
\FORALL{$\Delta_{pp'p''}\in D$}
\IF{no edge of $\Delta_{pp'p''}$ is oriented}
\STATE orient $(p,p')$ arbitrarily
\STATE orient $\Delta_{pp'p''}$, the orientation is fixed by $(p,p')$
\ELSIF{$1$ edge of $\Delta_{pp'p''}$ is oriented}
\STATE orient $\Delta_{pp'p''}$, the orientation is fixed by this edge
\ELSIF{$\geq 2$ edges of $\Delta_{pp'p''}$ are oriented}
\STATE skip $\Delta_{pp'p''}$
\ENDIF
\ENDFOR
\STATE orient the remaining edges arbitrarily
\end{algorithmic}
\end{algorithm}


\subsection{Hardness}\label{sec:2D-hardness}

We now have shown that although a $1$-spanner does not exist for every two-dimensional point set, we get a $2$-spanner via the oriented complete graph. However, this leads to a graph with $\frac{1}{2} n \cdot (n-1)$ edges, so this graph is neither sparse nor plane.

We will see that computing a minimum oriented spanner with a linear number of edges  is NP-hard.

\begin{theorem}
\label{theo:min-dil-np-hard}
Given a two-dimensional set $P$ of $n$ points and the parameters $t$ and $m$, it is NP-hard to decide if there is an oriented $t$-spanner $G=(P,E)$ with $|E|\leq m$.
\end{theorem}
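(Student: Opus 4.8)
The plan is to reduce from a Hamiltonicity problem by exploiting the tight interplay between the edge budget $m'$ and finiteness of the oriented dilation, choosing $m'=n$. First I would establish a structural lemma: in any oriented graph $G=(P,E)$ with $|E|\le n$ and finite dilation $t$, every vertex must lie on a directed cycle, since a vertex of in-degree or out-degree $0$ belongs to no oriented cycle and hence makes $\odil(p,p')$ unbounded for every pair involving it. As each vertex then needs both in- and out-degree at least $1$ while the out-degrees sum to at most $n$, every in- and out-degree is exactly $1$, so $G$ is a vertex-disjoint union of directed cycles. Finiteness of $\odil(p,p')$ for \emph{every} pair then forces a single Hamiltonian cycle, because two vertices in distinct disjoint cycles share no oriented cycle.

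The second key observation is that when $G$ is a single Hamiltonian cycle $C$ of total length $L(C)$, the only oriented cycle through any pair $p,p'$ is $C$ itself, so $\odil(p,p')=L(C)/|\Delta(p,p')|$ and the dilation of $G$ equals $L(C)/\delta$, where $\delta=\min_{p,p'\in P}|\Delta(p,p')|$ depends only on $P$ and is computable in polynomial time (for each pair, find its optimal third point). Consequently, an oriented $t$-spanner with at most $n$ edges exists if and only if $P$ admits a Hamiltonian cycle of length at most $t\delta$.

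This equivalence reduces the decision problem to Euclidean TSP: given a TSP instance $(P,\ell)$, I would output $(P,\,t=\ell/\delta,\,m'=n)$; a sparse oriented $t$-spanner then exists precisely when a tour of length at most $\ell$ does. Since Euclidean TSP is NP-hard, NP-hardness of our problem follows, matching the statement (which claims only hardness, not membership in NP).

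The hard part will be the threshold handling intrinsic to Euclidean distances: comparing $L(C)$ against $t\delta$ is a sum-of-square-roots comparison, the same obstruction that keeps Euclidean TSP out of a clean NP-membership argument. I would therefore either invoke the established NP-hardness of Euclidean TSP directly, or, to keep the reduction fully combinatorial, reduce instead from Hamiltonicity of a graph class that embeds in the plane with controlled rational distances, using a gadget that charges a fixed penalty to each non-adjacency so that short Hamiltonian cycles correspond exactly to Hamiltonian cycles of the input graph. The remaining technical care is verifying that no degenerate low-edge configuration escapes the single-cycle structure and that $\delta$ is realised as intended, so that the threshold $t\delta$ faithfully separates the yes- and no-instances.
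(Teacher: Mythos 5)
Your proposal is correct and follows essentially the same route as the paper: fix $m' = n$, observe that finite dilation forces the spanner to be a single Hamiltonian cycle, note that the dilation then equals the tour length divided by the (polynomially computable) minimum triangle length $\delta$, and reduce from the Euclidean TSP decision problem with threshold $t\delta$. In fact, your write-up is somewhat more careful than the paper's, which asserts without justification that every oriented spanner with $n$ edges is a cycle, whereas you supply the in-/out-degree counting argument and explicitly address the sum-of-square-roots issue by invoking NP-hardness (rather than NP-completeness) of Euclidean TSP.
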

\begin{proof}
We reduce from the NP-complete Hamilton circuit problem for grid graphs~\cite{ItaiPS82.hamilton}. 
We start with a polynomial-size node-induced subgraph $G'=(P',E')$ of a grid whose cells have side length one. We choose $P=P'$, $m=n$, where $n=|P|$, and $t=n \cdot c$,
where $c$ is a rational number such that 
\[ 
\frac{1}{2 + \sqrt{2}} \leq  c  <  \frac{1}{2+\sqrt{2}} + \frac{0.1}{n} < \frac{1}{2+\sqrt{2}} + \frac{1}{n} \cdot \frac{\sqrt{2} - 1}{2+\sqrt{2}}.
\]
To choose $t$, it is sufficient to compute the first $\bigo(\log (n))$ digits of $c$, which is possible in polynomial time.

We will show that there is an oriented $t$-spanner $G=(P,E)$ with $|E|\leq m$ if and only if $G'$ has a Hamiltonian cycle.

First, we show that if $G'$ has a Hamiltonian cycle, then there is an oriented $t$-spanner $G=(P,E)$ with $|E|\leq m$.
Let $E_{HC}$ be the edges of a Hamiltonian cycle in $G'$. We set $G=(P,E_{HC})$. It remains to bound the dilation of $G$. Since the $n$ edges of $G$ form a simple cycle, a shortest closed walk of any point tuple is this cycle. Further, all edges have length $1$. As the points in $P$ are embedded on a grid whose cells have side length one, it holds $|\Delta(p,p')| \geq  2+\sqrt{2}$ for all $p,p'\in P$.
Therefore, the dilation of $G$ is
\[ \underset{p,p'\in P}{\max}\frac{|C_G(p,p')|}{|\Delta(p,p')|} \leq \frac{n\cdot 1}{2+\sqrt{2}} \leq  t .\]

Now, we show that if $G'$ has no Hamiltonian cycle, then there is no oriented $t$-spanner $G=(P,E)$ with $|E|\leq m$.
Any strongly connected graph $G$ on $P$ with $m=n$ edges must be a cycle. 
Again, a shortest closed walk of any point tuple is this cycle.
So, the dilation of this graph $G=(P,E)$ is 
\[ \underset{p,p'\in P}{\max}\frac{|C_G(p,p')|}{|\Delta(p,p')|} =\frac{\underset{e\in E}{\sum} |e|}{ \underset{p,p'\in P}{\min} |\Delta(p,p')|}\]
W.l.o.g., we may assume that there is always an ``L-shaped'' point triple $p_1,p_2,p_3\in P$ with $|p_1-p_2|=|p_2-p_3|=1$ and $|p_1-p_3|=\sqrt{2}$. If no such triple exists, $G'$ is a one-dimensional path and the Hamilton circuit problem for this graph class is trivial. Therefore, we have
\[ \underset{p,p'\in P}{\min} |\Delta(p,p')| = 2+\sqrt{2}.\]
Since $G'$ has no Hamiltonian cycle, $G$ must contain an edge whose length is at least $\sqrt{2}$.
Therefore, the dilation of $G$ is at least
\begin{equation*}
\frac{\underset{e\in E}{\sum} |e|}{ 2+\sqrt{2}} \geq \frac{n-1+\sqrt{2}}{2+\sqrt{2}} > t. \tag*{\qedhere}
\end{equation*}
\end{proof}

For computing a minimum \emph{plane} oriented spanner, i.e.\ a minimum oriented dilation triangulation, hardness remains open. In the undirected setting this question is a long-standing open problem~\cite{brandt-mdt-2014,Eppstein00,Giannopoulos.2010}. To our knowledge, it is not even known whether a PTAS for this problem exists, and for several related open problems there is no FPTAS~\cite{Giannopoulos.2010}, unless $P=NP$. We show the following relative hardness result.

\begin{observation}
\label{observation:minimum_dilation_triangulation}
An FPTAS for minimum plane oriented spanner would imply an FPTAS for minimum dilation triangulation.
\end{observation}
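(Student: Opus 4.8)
The plan is to give an approximation-factor-preserving reduction from the minimum dilation triangulation (MDT) problem to the minimum plane oriented spanner problem, so that a hypothetical FPTAS for the latter can be run on the reduced instance and its output decoded into a near-optimal triangulation. Let $P$ be the MDT instance. I would build an oriented instance $P'$ by replacing each point $p\in P$ with a small \emph{vertex gadget}: a constant-size cluster of points confined to a disk of radius $\epsilon$ around $p$, where $\epsilon$ is a scale parameter much smaller than the minimum interpoint distance of $P$ and fixed later. The gadget acts as an \emph{in/out splitter}: it contains designated entry and exit points joined by forced oriented edges, so that an oriented edge leaving gadget $i$ and entering gadget $j$, together with the mirror edge from $j$ back to $i$, emulates the two traversal directions of a single undirected edge $\{p_i,p_j\}$. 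Since each gadget has radius $\epsilon$, every inter-gadget connection has length $(1\pm O(\epsilon))\,|p_i-p_j|$ and the internal forced edges contribute only $O(\epsilon)$.

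The first technical step is to show that the dilation objective alone forces the intended combinatorial structure. I would design the gadget so that any oriented spanner whose dilation lies below the (finite) threshold attainable by a triangulation must (i) contain all canonical internal gadget edges, and (ii) realise each used macro-connection in both orientations. Deleting the gadgets from such a plane oriented spanner on $P'$ then yields a plane graph on $P$, and planarity together with maximality of low-dilation solutions forces this graph to be a triangulation; conversely, any plane triangulation of $P$ lifts to a plane oriented spanner of $P'$ by installing the two oriented copies of each edge through the splitters. This establishes a bijective-up-to-forced-edges correspondence between plane triangulations of $P$ and the relevant plane oriented spanners of $P'$.

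The second step is the numerical correspondence of the two objectives. For a pair of original points $p_i,p_j$, the shortest oriented cycle through their gadget images is, up to the $O(\epsilon)$ gadget overhead, a round trip along a shortest directed path from $i$ to $j$ and back, whose length is governed by the undirected shortest path $d_G(p_i,p_j)$ in the corresponding triangulation; the normalising optimal cycle $\Delta(\cdot,\cdot)$ of \cref{def:odilation} likewise reduces to twice the relevant undirected optimum. Carrying these estimates through gives
\[
t_{\mathrm{oriented}}(P') \;=\; \bigl(1 \pm O(\epsilon)\bigr)\, t_{\mathrm{MDT}}(P),
\]
and the map is monotone, so near-optimal oriented spanners decode to near-optimal triangulations.

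Finally, I would assemble the FPTAS transfer. Given target accuracy $\delta$ and the MDT instance, fix $\epsilon$ so that the $O(\epsilon)$ distortion is at most $\delta/2$; crucially this requires $\epsilon$ only inverse-polynomial in $\delta$ and in the geometry, hence of polynomially many bits, so the coordinates of $P'$ stay polynomially bounded and $|P'|=O(|P|)$. Running the hypothesised oriented FPTAS with parameter $\delta/2$ on $P'$ takes time polynomial in $|P'|$ and $1/\delta$, and decoding the returned spanner yields a triangulation of $P$ within a factor $(1+\delta)$ of optimal. The main obstacle is the gadget design in the first step: one must enforce the bidirectional edge structure and the triangulation constraint \emph{purely} through the max-dilation objective and planarity, with no explicit adjacency or edge-count constraint available, while keeping the gadget plane-compatible so that planarity of the oriented graph matches planarity of the triangulation, and tight enough that the optimal-cycle normalisation $\Delta$ transfers cleanly. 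Controlling the bit complexity so that the required $\epsilon$ is only inverse-polynomial, rather than exponentially small, is the subtle quantitative point that makes the scheme \emph{fully} polynomial.
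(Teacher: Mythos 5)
Your high-level plan --- reduce from minimum dilation triangulation by replacing each point with a tiny gadget, emulate each undirected triangulation edge by two oppositely oriented inter-gadget edges, show that the oriented and undirected objectives agree up to a factor $1\pm\bigo(\varepsilon)$, and transfer the FPTAS with $\varepsilon$ inverse-polynomial --- is exactly the structure of the paper's sketch. The gap lies in the one step you yourself flag as the main obstacle: you claim a \emph{constant-size} cluster per original point suffices, and it cannot. The reason is the interaction of orientedness with straight-line planarity. Since an oriented graph may not contain both $(a,b)$ and $(b,a)$, emulating the undirected edge $\{p_i,p_j\}$ forces the out-edge and the in-edge to attach to two \emph{distinct} points at (at least) one of the two gadgets. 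These two edges are nearly parallel segments in the direction of $p_ip_j$; if they attach to distinct cluster points $u\neq v$, then further edges attached to $u$ or $v$ in nearby directions cross one of them unless the direction of $p_ip_j$ essentially separates the fan of edges at $u$ from the fan at $v$. Hence a cluster of $\bigo(1)$ points offers such a ``double port'' in only $\bigo(1)$ directions, and these directions are fixed by the cluster geometry at construction time. But the reduction must be built \emph{before} the optimal triangulation is known, and for a point set in convex position every one of the $\Theta(n^2)$ segments $p_ip_j$ appears in some triangulation; so the gadgets together must provide $\Omega(n^2)$ double ports, forcing gadget size $\Omega(n)$. This is precisely why the paper's gadget (\cref{fig:min-dil-tri-2d}) has $3\cdot\lceil 4n/3\rceil=\Theta(n)$ points arranged in a chain of small triangles, leaving room for two edges, one per direction, to each of the $n$ other points without violating planarity, and why the reduced instance has $\Theta(n^2)$ points rather than your claimed $\bigo(|P|)$.

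Concretely, with constant-size gadgets the encoding direction of your sandwich fails: whichever $\bigo(n)$ directions your splitters enable, there are triangulations (possibly the optimal one) whose edges lie in disabled directions; those edges admit no bidirectional plane realisation on $P'$, so any oriented cycle serving such a pair must detour through other gadgets, and the inequality $\mathrm{OPT}_{\text{oriented}}(P')\leq(1+\bigo(\varepsilon))\,t_{\mathrm{MDT}}(P)$ --- on which both your correspondence and the decoding of a near-optimal spanner into a near-optimal triangulation rest --- is unsubstantiated and in general false. The fix is the paper's: make each gadget of size $\Theta(n)$. The remainder of your argument (the $(1\pm\bigo(\varepsilon))$ dilation correspondence, monotone decoding, and the choice of an inverse-polynomial $\varepsilon$ so the transfer stays fully polynomial) then goes through unchanged, since a quadratic blow-up in the number of points is still polynomial.
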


\begin{figure}[ht]
\centering
\begin{tikzpicture}[directed]
\node[dot] (l0) at (0, 0) {};
\node[dot]  (l1) at (1, 0) {};
\node[dot]  (l2) at (5, 0) {};
\node[dot]  (l3) at (9, 0) {};
\node[dot]  (l4) at (10, 0) {};

\node[dot]  (u0) at (.5, 1) {};
\node[dot]  (u1) at (4.5, 1) {};
\node[dot]  (u2) at (5.5, 1) {};
\node[dot]  (u3) at (9.5, 1) {};

\foreach \i [evaluate=\i as \j using {int(\i+1)}] in {0, 1, 2}
\path (u\i) edge (u\j);
\foreach \i [evaluate=\i as \j using {int(\i-1)}] in {1, 2, 3,4}{
\path (l\i) edge (l\j);
}
\path
(l0) edge (u0)
(u0) edge (l1)
(l2) edge (u1)
(u2) edge (l2)
(l3) edge (u3)
(u3) edge (l4)
;

\draw ($(l1.south)-(0,.3)$) edge[|-|]  node[below] {$c_2\cdot\varepsilon/n$} ($(l2.south)-(0,.3)$);
\draw ($(u0.north)+(0,.3)$) edge[|-|]  node[above] {$c_2\cdot\varepsilon/n$} ($(u1.north)+(0,.3)$);
\draw ($(l1.south)-(0,.3)$) edge[-|]  node[below] {$c_1\cdot\varepsilon^2/n$} ($(l0.south)-(0,.3)$);
\draw ($(l0.west)+(-.3,.3)$) edge[|-|]  node[above, sloped] {$c_1\cdot\varepsilon^2/n$} ($(u0.west)+(-.3,.3)$);
\end{tikzpicture}
\caption{By replacing each point with this construction, we can reduce the minimum dilation triangulation problem to the minimum plane oriented spanner problem.}
\label{fig:min-dil-tri-2d}
\end{figure}

We sketch a reduction from the minimum dilation triangulation problem.
Suppose we are given a set $P$ of $n$ points.
The idea is to replace every point with the gadget depicted in
\Cref{fig:min-dil-tri-2d}.
The gadget consists of $3\cdot\lceil\frac{4n}{3}\rceil$ points, positioned along a line in small triangles. Since the triangles are small relative to the distance between the triangles, we obtain an oriented dilation of less than $(1+c\varepsilon)$ within the gadget by connecting the gadget points as in the figure, where $c>0$ is a constant that we can choose by suitably picking the constants $c_1$ and $c_2$.
%

The gadget leaves us with room for two edges to every one of $n$ points, one in each direction, without disturbing planarity.
Therefore, minimising the oriented dilation in this setting while requiring planarity should pick the edges that correspond to those in the minimum dilation triangulation, except if they are $\varepsilon$-close.


\subsection{Greedy Triangulation}\label{sec:2D-greedy}

Given the relative hardness of computing minimum plane oriented spanners, we next inves\-tigate the question of whether plane oriented $\bigo(1)$-spanners exist. In the undirected setting,  prominent examples for plane constant dilation spanners are the Delaunay triangulation and the greedy triangulation. Our main result on oriented spanners in two dimensions is that the greedy triangulation is an $\bigo(1)$-spanner for convex point sets (i.e.\ point sets for which the points lay in convex position).

The \emph{greedy triangulation} of a point set is the triangulation obtained by considering all pairs of points by increasing distance, and by adding the straight-line edge if it does not intersect any of the edges already added. The greedy triangulation can be computed in linear time from the Delaunay triangulation~\cite{LEVCOPOULOS1999197}, and in linear time for a convex point set if the order of the points along the convex hull is given~\cite{LevcopoulosL92}. 

It is known that the dilation of the undirected greedy triangulation is bounded by a constant. This follows from the fact that the greedy triangulation fulfils the $\alpha$-diamond-property~\cite{DasJ89}. The currently best upper bound on the dilation $t$ of such a triangulation is $ t\leq \frac{8(\pi - \alpha)^2}{\alpha ^2 \sin^2(\sfrac{\alpha}{4})}$ with a lower bound on $\alpha$ of $\sfrac{\pi}{6}$ for the greedy triangulation~\cite{BoseLS07}.

\begin{figure}[ht]
\centering
\begin{tikzpicture}[every edge/.style={draw, semithick}, scale=.8]
\foreach \a in {2,3,...,6}{
\pgfmathtruncatemacro\b{\a+1}
\pgfmathtruncatemacro\c{\a+98}
\draw (\c*-360/60:-15) node[dot]  (v\a){};
}

\node[dot] (v1) at ($(v2|-v6)+(1.35,0)$) {};
\node[dot] (v0) at ($(v1)+(1.5,-2.8)$) {};

\node[graph/plain]  at ($(v2.east)+(.4,0)$) {$p_3$};
\node[graph/plain]  at ($(v6.south)+(0,-.4)$) {$p_n$};
\node[graph/plain]  at ($(v1.east)+(.4,0)$) {$p_2$};
\node[graph/plain]  at ($(v0.south)+(0,-.4)$) {$p_1$};

\draw  (v1) edge (v2);
\draw  (v1) edge (v3);
\draw  (v1) edge (v4);
\draw  (v1) edge (v5);
\draw  (v1) edge (v6);
\draw  (v6) edge (v5);
\draw  (v5) edge (v4);
\draw  (v4) edge (v3);
\draw  (v3) edge (v2);
\draw  (v2) edge (v4);
\draw  (v4) edge (v6);
\draw  (v6) edge (v2);
\draw (v0) edge (v1);
\draw (v0) edge (v6);

\draw[dashed, opacity=.5] (v0) edge (v2);

\foreach \a in {2,3,...,6}{
\pgfmathtruncatemacro\b{\a-1}
\draw ($(v0.south -|v\b)-(0,.8)$) edge[-|]  node[below] {$1$} ($(v0.south -|v\a)-(0,.8)$);
}

\draw ($(v0.south)-(0,.8)$) edge[|-|]  node[below] {$1+\varepsilon$} ($(v0.south -|v1)-(0,.8)$);

\draw ($(v0.east)+(.3,0)$) edge[|-|]  node[right] {$\delta$} ($(v0.east |-v1)+(.3,0)$);
\draw ($(v0.east |-v1)+(.3,0)$) edge[-|]  node[right] {$\delta'$} ($(v0.east |-v2)+(.3,0)$);
\end{tikzpicture}
\caption{Greedy triangulation for a non-convex point set. It is is also a  minimum weight triangulation.}
\label{fig:greedy-non-convex-omega-n}
\end{figure}

We first observe that restricting to convex point sets is necessary to obtain constant dilation from orienting the greedy triangulation. For this, consider the greedy triangulation $T=(P,E)$ on the following non-convex point set $P = \{p_1, \ldots, p_n\}$. As shown in \Cref{fig:greedy-non-convex-omega-n}, we place the points $\{p_3,\ldots,p_n\}$ on a very flat parabola.
By arbitrarily decreasing the $y$-distances $\delta,\delta'>0$, we reduce the construction to an almost one-dimensional problem.
The $x$-distance of each consecutive point pair $p_i,p_{i+1}$ is $1$ for $2\leq i <n$. 
We place $p_1$ slightly right of $p_2$ such that the $x$-distance is $1+\varepsilon$ for a small $\varepsilon>0$. Therefore, the $x$-distance of $p_1$ to $p_i$ is $\varepsilon$-larger than $x$-distance of $p_2$ to $p_{i+1}$ for $3\leq i <n$. By this, the greedy triangulation added  $(p_2,p_{i+1})\in E$ and discarded $(p_1,p_i)\notin E$ for $3\leq i <n$.
Further, we place $p_1$ slightly below $p_2$ so that, due to planarity, $p_1$ is only adjacent to $p_2$ and $p_n$. Since $p_1$ has degree~$2$, for any orientation of $T$, every shortest closed walk containing $p_1$ contains the subpath $p_n,p_1,p_2$ or vice versa.
From this construction, it follows that $\odil(p_1,p_3)\geq \frac{n-1+\varepsilon}{2+\varepsilon}$, for any orientation of $T$. 
Therefore, every orientation of greedy triangulation has oriented dilation $\Omega(n)$. 

In contrast, for convex point sets, a consistent orientation of the greedy triangulation results in a $\bigo(1)$-spanner. Since the dual graph of a triangulation of a convex point set is a tree, it is orientable. If the neighbourhood of each point is sorted and accessible in constant time, this consistent orientation can be computed in $O(n)$ time.

Next, for a convex point set, we show that a consistent orientation of the greedy triangulation has constant dilation. The proof works similarly to the proof of correctness for the greedy one-page plane spanner (see \Cref{theo:alg-greedy}).

\begin{theorem}
\label{theo:alg-2D-greedy-convex}
By orienting the greedy triangulation of a convex two-dimensional point set~$P$ consistently, we get a plane oriented $\bigo(1)$-spanner for $P$.
\end{theorem}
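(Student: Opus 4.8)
The plan is to mirror the structure of the one-page plane proof (\cref{theo:alg-greedy}): first reduce the set of point pairs whose oriented dilation must be controlled, and then bound the shortest oriented cycle of each remaining pair by a greedy/blocker argument, now carried out in the plane. Throughout I fix the cyclic order $p_1,\ldots,p_n$ of the points along the convex hull and work with the consistent orientation, which exists because the dual of a triangulation of a convex point set is a tree; recall that every bounded face is then a directed triangle, so a directed edge $(a,b)$ is traversed in the same direction by each incident face, and hence each incident face offers a reverse path $b\to q\to a$ through its apex $q$. This is the feature of convexity (beyond mere orientability) that I will exploit, and its failure is exactly what the non-convex example of \cref{fig:greedy-non-convex-omega-n} exhibits.

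First I would adopt the round-trip viewpoint: the shortest oriented cycle through $p$ and $p'$ has length $d_G(p,p')+d_G(p',p)$, while $|\Delta(p,p')|\geq 2\,|p-p'|$. This makes $\odil(p,p')$ essentially a round-trip dilation, so that for any intermediate hull vertex $p_j$ the triangle inequality gives the subadditivity $|C_G(p_k,p_m)|\leq |C_G(p_k,p_j)|+|C_G(p_j,p_m)|$. Combining this with a convex-geometry estimate of the form $|\Delta(p_k,p_m)|\geq \tfrac1C\bigl(|\Delta(p_k,p_j)|+|\Delta(p_j,p_m)|\bigr)$ for a suitably chosen $p_j$, a mediant inequality (as used in \cref{theo:odil-1D-i+2-i+3} and \cref{theo:odil-1D-1sided}) yields $\odil(p_k,p_m)\leq C\cdot\max\{\odil(p_k,p_j),\odil(p_j,p_m)\}$. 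Iterating this reduces the problem to pairs of hull vertices separated by only a constant number of vertices.

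For such a local pair $p_i,p_{i+1}$ I would bound the shortest oriented cycle directly. Its forward part is the hull edge, and its backward part is the shortest directed path from $p_{i+1}$ to $p_i$; the natural candidate is the detour $p_{i+1}\to q\to p_i$ around the face incident to the hull edge, so it suffices to bound that face's perimeter by $\bigo(1)\cdot|\Delta(p_i,p_{i+1})|$. Here the greedy choice enters exactly as in \cref{def:greedy-blocker}: the apex edge realising $\Delta(p_i,p_{i+1})$, if absent, is blocked by a shorter crossing diagonal, and following the resulting chain of blockers — now comparing planar Euclidean lengths rather than one-dimensional gaps — bounds how far the apex $q$ can lie from the edge. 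The $\alpha$-diamond property of the greedy triangulation~\cite{DasJ89,BoseLS07} supplies the fatness that keeps these length comparisons uniform and hence the resulting constant independent of $n$.

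The main obstacle is the reduction step. In the one-dimensional setting the optimal cycle length is exactly $2(p_m-p_k)$ and Euclidean distances add along the line, so the splitting in \cref{theo:odil-1D-1sided} is immediate; in convex position neither fact holds, so the superadditive lower bound on $|\Delta|$ and the correct choice of split vertex $p_j$ require genuine convex-geometry arguments rather than the clean telescoping of the one-dimensional proof. The secondary difficulty is re-running the blocker analysis of \cref{theo:alg-greedy} with planar distances, where the slack in each triangle inequality must be absorbed into the constant. Once the reduction and the planar blocker bound are in place, combining them as in \cref{theo:alg-greedy} gives a constant bound on $\odil$, and since the consistent orientation is plane, we obtain a plane oriented $\bigo(1)$-spanner.
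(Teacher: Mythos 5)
Your second step---bounding the oriented cycle of a triangulation edge by a blocker chain driven by the greedy choice, convexity, and the $\alpha$-diamond property---is indeed the heart of the paper's proof (its case~i), which carries out exactly the planar analogue of \cref{def:greedy-blocker} and \cref{theo:alg-greedy} that you describe). The fatal problem is your first step, the reduction, and it is structural rather than a missing convex-geometry lemma. Your splitting inequality $\odil(p_k,p_m)\leq C\cdot\max\{\odil(p_k,p_j),\odil(p_j,p_m)\}$ cannot hold with $C=1$ in the plane: if $p_k,p_j,p_m$ are mutually close and all other points are far away, all three pairwise optimal triangles coincide with $\Delta_{p_kp_jp_m}$, so $|\Delta(p_k,p_j)|+|\Delta(p_j,p_m)|=2\,|\Delta(p_k,p_m)|$, forcing $C\geq 2$ in such configurations. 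Once $C>1$, iterating the reduction compounds the constant: after $d$ levels of splitting you only obtain $\odil(p_k,p_m)\leq C^{d}\cdot\max(\text{local dilations})$, and reducing an arbitrary pair to pairs separated by $\bigo(1)$ hull vertices needs depth $d=\Omega(\log n)$ even with perfectly balanced splits (and possibly $d=\Omega(n)$ with an unbalanced split point such as the apex of the optimal triangle). With $C\geq 2$ this gives a bound of order $n^{\log_2 C}\geq n$, not $\bigo(1)$. The one-dimensional lemmas you imitate (\cref{theo:odil-1D-i+2-i+3}, \cref{theo:odil-1D-1sided}) iterate for free precisely because there $|\Delta(p_k,p_m)|=2(p_m-p_k)$ is exactly additive under splitting, i.e.\ $C=1$; that exactness is what fails in convex position, and flagging it as ``the main obstacle'' does not remove it.

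The paper avoids recursion altogether, and this is the idea your outline is missing. For a pair $p,p'$ that is not a triangulation edge, it takes the undirected path $\Pi$ from $p$ to $p'$ inside the greedy triangulation guaranteed by the $\alpha$-diamond property, with $|\Pi|\leq t_g\,|p-p'|$, concatenates the oriented cycles $C_G(q',q'')$ of the edges $\{q',q''\}$ of $\Pi$, and compares the total length once against $|\Delta(p,p')|\geq 2\,|p-p'|$; the constant from the per-edge blocker argument therefore enters multiplicatively a single time, not once per recursion level. Note that this requires the edge case for \emph{all} triangulation edges, not only hull edges: your local analysis treats only consecutive hull vertices $p_i,p_{i+1}$ via the face incident to the hull edge, but the edges of $\Pi$ are in general interior diagonals, and the pairs your reduction would leave over (separated by two or three hull vertices) need not be triangulation edges at all. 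If you replace your recursive reduction by this path-concatenation step and extend your blocker analysis to arbitrary edges of the triangulation, your sketch becomes essentially the paper's proof.
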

\begin{proof}
Let $T=(P,E)$ be the greedy triangulation of $P$ and $G=(P, \overrightarrow{E})$ its consistent orientation (which is unique up to reversing all edges). Note that $T$ is an undirected graph, whereas $G$ is directed. To improve readability, undirected edges are written with curly brackets and directed edges with round brackets. 
Due to the $\alpha$-diamond property, the undirected dilation of any greedy triangulation can be bounded by a constant~\cite{DasJ89}. Let $t_g$ be the (smallest such) constant.

Let $p,p'$ be two points in $P$. We will prove $\odil(p,p')\in \bigo(1)$. 
We distinguish between whether i) $\{ p,p'\}$ is in ${E}$ or ii) not.

For i), we prove that there is a consistently oriented triangle, i.e.\ a closed walk containing $p$ and $p'$ of length in $\bigo (|\Delta(p,p')|)$, where $\Delta(p,p')$ is a smallest triangle in the complete graph incident to $p$ and $p'$. 
In other words, we bound the length of the shortest closed walk $C_G(p,p')$ in $G$. 

Let $q\in P$ be the third point incident to $\Delta(p,p')$, i.e.\ $\Delta(p,p')= \Delta_{pp'q}$. 
Let $\Pi$ be the undirected path from $p$ to $q$ in $T$. Let $q'$ be the point, such that $\{p,q'\}$ is the first edge in $\Pi$, w.l.o.g.\ $q' \neq p'$.
(If $q' = p'$, then the proof would be the same with switched roles for $p$ and $p'$.) Due to the $\alpha$-diamond property, it holds that $|\Pi|\leq t_g |p-q|$ and therefore also $|p-q'|\leq t_g |p-q|$.

If $\{q',p'\}\in E$, then $\Delta_{pq'p'}\in E$. Since $\Delta_{pq'p'}$ is oriented consistently in $G$, we can bound 
\begin{align*}
\odil(p,p')\leq &
\frac{|p-p'|+|p'-q'|+|q'-p|}{|\Delta(p,p')|}  \overset{\text{triangle inequality}}{\leq } \frac{2|p'-p| + 2|p-q'|}{| \Delta_{pp'q}|}\\
\overset{\text{$\alpha$-diamond}}{\leq} &\frac{2 |p'-p| + 2 \cdot t_g |p-q|}{| \Delta_{pp'q}|}  ~{\leq} ~  \frac{2 \cdot t_g | \Delta_{pp'q}|}{| \Delta_{pp'q}|}~{\leq} ~ 2 \cdot t_g \quad \text{~for $\{q',p'\}\in E$.}
\end{align*}

However,  $\{q',p'\}$ could be blocked by another edge. 
Since $P$ is convex and $T$ is planar, this edge is incident to $p$. Let  $\{p,b_1\}$ be the (shortest) edge that blocks $\{q',p'\}$.
Analogous to \Cref{theo:alg-greedy}, we show that there could be a ``sequence of blockers'' (see \Cref{def:greedy-blocker}).  However, we will show that there is a consistent oriented triangle containing $p$ and $p'$ of length in $\bigo (|\Delta(p,p')|)$.  
Let $\Delta_{pp'b_{j+2c}}\in E$ be this triangle, i.e. $\{p,b_{j+2c}\}$ is the last edge in this ``sequence of blockers''.
Let the points $b_1,\ldots, b_{j+2c}$ be ordered such that $\{p,b_i\}\in E$ is the shortest edge that blocks $\{p',b_{i-1}\}$ with $b_0=q'$ (see \Cref{fig:2D-greedy-orientation-Cg}).

\begin{figure}[ht]
\centering
\begin{tikzpicture}[every edge/.style={graph edge, semithick}, scale=.8]
\node [dot,label={-90:{$p$}}] (p) at (-2.5,-2.5) {};
\node [dot,label={90:{$q'$}}] (p') at (-5,2) {};
\node [dot,label={-90:{$p'$}}] (q') at (2,-4) {};
\node [dot,label={3:{$b_1$}}] (b1) at (-3,2.5) {};

\node [dot,label={0:{$b_{j+2c}$}}] (bj+2c) at (5,-2.5) {};
\node [dot,label={0:{$b_2$}}] (b2) at (-1,2.5) {};
\node [dot,label={0:{$b_{j}$}}] (bj) at (3,1.5) {};
\node [dot,label={0:{$b_{j+1}$}}] (bj+1) at (4,0.5) {};

\draw (p') edge (p)  ;

\path[]
(p) edge (b1)
(p) edge (b2)         
(p) edge (bj)
(p) edge (bj+1)
;

\path[] 
(bj+2c) edge  (p)
(p) edge (q')
(q') edge (bj+2c)
;

\path[dashed] 
(q') edge[firstColor] (p')
(q') edge[firstColor] (b1)
(q') edge[firstColor] (b2)
(bj) edge[firstColor] (q')
(bj+1) edge[firstColor] (q')
;
\end{tikzpicture}
\caption{For $\{ p,p'\}\in E$, since $\Delta_{pp'b_{j+2c}}$ is oriented consistently, $|C_G(p,p')|\leq|\Delta_{pp'b_{j+2c}}|$.}  
\label{fig:2D-greedy-orientation-Cg}
\end{figure}

The following inequalities hold for these blockers: 
\begin{align}
& |p-b_i|\leq |p'-b_{i-1}|\label{eq:greedy-choice-i-i+1},\\
&  |p-b_i|\leq |p-b_k| \text{ for $1\leq i<k\leq j+2c$,}\label{eq:greedy-choice-order} \text{ and}\\
&  |p-b_{i-1}|\leq  |b_{i-2}-b_{i}| \text{ for $3\leq i\leq  j+2c$.}\label{eq:greedy-choice-i-i+2}  
\end{align}
Equation~\ref{eq:greedy-choice-order} is true, as otherwise $\{p,b_k\}$ would block $\{p',b_{i-1}\}$ instead of $\{p,b_i\}$.     

Equation~\ref{eq:greedy-choice-i-i+2} is explained as follows: due to convexity and planarity, $\{p,b_{i+1}\}\in E$ implies $\{b_i,b_{i+2}\}\notin E$. This means the greedy algorithm added $\{p,b_{i+1}\}$ and discarded $\{b_i,b_{i+2}\}$ in later iteration.

Let $\gamma>\pi$ be an arbitrary constant. We call an edge \emph{long} if its length is larger than $|p-p'|\cdot \gamma$. 
Because of Equation~\ref{eq:greedy-choice-order}, once an edge $\{p,b_i\}$ is long, every edge $\{p,b_k\}$ is also long, for $1\leq i<k\leq j+2c$. 
Let $b_{j}$ be the point such that $\{p,b_j\}$ is the \emph{last short} blocker, i.e.\
\begin{align}
& |p-b_{j}|\leq  |p-p'|\cdot \gamma \label{eq:greedy-last-shortblocker} \text{ and}\\
&  |p-b_k|> |p-p'|\cdot \gamma \text{ for all $j+1\leq  k\leq j+2c$.} \label{eq:greedy-longblockers}
\end{align} 

By this, the length of the last blocker $\{p,b_{j+2c}\}$ depends on $\gamma$ and $c$:
\begin{align*}          
\label{eq:greedy-Radius}
|p-b_{j+2c}| & \overset{\text{eq.~\ref{eq:greedy-choice-i-i+1}}}{\leq} |p'-b_{j+2c-1}|\overset{\substack{\text{triangle}\\\text{inequality}}}{\leq} |p-p'|+ |p-b_{j+2c-1}|\\
&\overset{\text{eq.~\ref{eq:greedy-choice-i-i+1}}}{\leq} |p-p'|+ |p'-b_{j+2c-2}| \overset{\substack{\text{triangle}\\\text{inequality}}}{\leq}\hspace{-4mm} \ldots  \leq 2c\cdot|p-p'|+ |p-b_{j}| \\
& \overset{\text{eq.~\ref{eq:greedy-last-shortblocker}}}{\leq} (2c + \gamma ) \cdot |p-p'|
\tag{\theequation}\stepcounter{equation}
\end{align*}

Due to convexity, the points $b_{j+1},\ldots,b_{j+2c}$ must be contained in a half circle with centre $p$ and radius $|p-b_{j+2c}|$ (see  \Cref{fig:2D-greedy-circumference-blocker-i+2}). 
Therefore, their pairwise distances are bounded by the circumference of the half cycle. 
Upper and lower bounding the sum of the pairwise distances of tuples $b_{j+i}, b_{j+i+2}$ for $1\leq i\leq 2c-2$, it follows $c$ is bounded by a function dependent on $\gamma$:
\begin{align*}
&(2c-2) \cdot \gamma  \cdot |p-p'| = \sum_{i=2}^{2c-1} \gamma  \cdot |p-p'|  \\
&\overset{\text{eq.~\ref{eq:greedy-longblockers}}}{<}   \sum_{i=2}^{2c-1} |p-b_{j+i}| \overset{\text{eq.~\ref{eq:greedy-choice-i-i+2}}}{\leq}  \sum_{i=1}^{2c-2} |b_{j+i}-b_{j+i+2}|\\
&\leq \pi \cdot |p-b_{j+2c}| \overset{\text{eq.~\ref{eq:greedy-Radius}}}{\leq}  \pi \cdot (2c+\gamma) \cdot |p-p'|  \\    
&\iff~ c  <  \gamma \cdot \frac{\pi  + 2}{2\gamma - 2\pi} \quad\text{for $\gamma>\pi$.}
\end{align*}
For $\gamma=\bigo(1)$, it holds that $c =\bigo(1)$. Thus, the length of the last blocker $\{p,b_{j+2c}\}$ is bounded by a constant times $|p-p'|$.       
\begin{figure}[ht]
\centering
\begin{tikzpicture}[every edge/.style={graph edge, semithick}]
\node [dot,label={-90:{$p$}}] (p) at (0,-2) {};
\node [dot,label={-90:{$b_{j+1}$}}] (bj) at (-1.8,-0.6) {};
\node [dot,label={90:{$b_{j+2}$}}] (bj+1) at (-1.4,0.4) {};
\node [dot,label={90:{$b_{j+3}$}}] (bj+2) at (0,1) {};
\node [dot,label={0:{$b_{j+2c-2}$}}] (bj+2c-2) at (2.2,0.4) {};
\node [dot,label={0:{$b_{j+2c-1}$}}] (bj+2c-1) at (3.2,-1.2) {};
\node [dot,label={0:{$b_{j+2c}$}}] (bj+2c) at (3.7,-3.2) {};
\draw  (p) edge (bj);
\draw  (p) edge (bj+2);
\draw  (p) edge (bj+2c-2);
\draw  (p) edge (bj+2c);
\draw [dashed] (bj) edge[firstColor] (bj+2);
\draw [loosely dotted, thick] (bj+2) edge (bj+2c-2);
\draw [dashed] (bj+2c-2) edge[firstColor] (bj+2c);        

\draw  (p) edge (bj+1);
\draw  (p) edge (bj+2c-1);

\tkzDrawArc[dashed,opacity=.5](p,bj+2c)(bj)
\end{tikzpicture}
\caption{The points $b_{j+1},\ldots,b_{j+2c}$ lie in a half circle with centre $p$ and radius $|p-b_{j+2c}|$.}
\label{fig:2D-greedy-circumference-blocker-i+2}
\end{figure}

For i) $\{p,p'\}\in {E}$, we conclude that, since $\Delta_{pp'b_{j+2c}}$ is a consistently oriented triangle in~$G$,  the length of the shortest closed walk containing $p$ and $p'$, is bounded by 
\begin{align*}
\label{eq:bound-C_g-edge-inside}
|C_G(p,p')| &\leq |\Delta_{pp'b_{j+2c}}| = |p-p'|+|p'-b_{j+2c}|+|p-b_{j+2c}|\\
&\overset{\text{triangle inequality}}{\leq}  2|p-p'|+2 |p-b_{j+2c}|\\
&\overset{\text{eq.~\ref{eq:greedy-Radius}}}{\leq}   2|p-p'|+2 |p-p'|\cdot (2c+\gamma) = |p-p'| \cdot (4c+2 \gamma +2).
\end{align*}
This permits a constant dilation for the case $\{p,p'\}\in E$ as       
\[\odil(p,p')=\frac{|C_G(p,p')|}{|\Delta(p,p')|} \leq \frac{|p-p'| \cdot (4c+2 \gamma +2)}{|\Delta_{pp'q}|}\leq  4c+2 \gamma +2= \bigo(1).\]

For ii) $\{p,p'\}\notin {E}$, due to the $\alpha$-diamond property, there is an undirected path $\Pi$ from $p$ to $p'$ in $T$ of length ${|\Pi|}\leq t_g \cdot |p-p'|$.
For each path edge $\{q',q''\}\in \Pi$ it holds that $\{q',q''\}\in E$. So, we apply case~i) for each path edge. 
The dilation for the case $\{p,p'\}\notin E$ is bounded by
\begin{align*}
\odil(p,p')=&\frac{|C_G(p,p')|}{|\Delta(p,p')|}\leq\frac{ \sum_{ \{q',q''\}\in \Pi} |C_G(q',q'')|}{|\Delta(p,p')|}& \\
\overset{\text{case i)}}{\leq}& \frac{ \sum_{ \{q',q''\}\in \Pi}  |q'-q''|  \cdot (4c+2 \gamma +2) }{|\Delta(p,p')|} = \frac{ (4c+2 \gamma +2) \cdot |\Pi|}{|\Delta(p,p')|}\\ 
\leq& \frac{ (4c+2 \gamma +2)  \cdot t_g \cdot |p-p'|}{|\Delta(p,p')|} \leq (4c+2 \gamma +2) \cdot t_g = \bigo(1). \tag*{\qedhere} 
\end{align*} 

\end{proof}

The aim of this proof was to show that the consistent orientation of the greedy triangulation provides a plane constant dilation spanner at all.    Since $c$ is a function depending on $\gamma> \pi$, the proven constant factor is minimised as
$$\underset{\gamma>\pi}{\min}~ \left( 4 \cdot \gamma \cdot \frac{\pi  + 2}{2\gamma - 2\pi} + 2\gamma+2 \right) \cdot t_g = (\pi+\sqrt{\pi(2+\pi)})\cdot t_g < 7.2 \cdot t_g,$$
where $t_g$ is the dilation of the greedy triangulation.

\begin{corollary}
Let $t_g$ be the smallest upper bound on the dilation of a greedy triangulation.
By orienting the greedy triangulation of a convex two-dimensional point set $P$ consistently, we get a plane oriented $(7.2 \cdot t_g)$-spanner for $P$.
\end{corollary}


\subsection{Other Triangulations}\label{sec:other_triangulations}

As the greedy triangulation leads to a plane oriented $\bigo(1)$-spanner for convex point sets, the question arises whether the $\alpha$-diamond property implies the existence of oriented spanners. 

We already have seen that this is not the case for the greedy triangulation on general point sets. Likewise, the minimum weight triangulation has the $\alpha$-diamond property~\cite{KeilGutwin92} but does not yield an $\bigo(1)$-spanner in general. This can be seen by the same example as for the greedy triangulation in \Cref{fig:greedy-non-convex-omega-n}. By essentially the same argument, the minimum weight triangulation will have the same edges incident to $p_1$, which results in an oriented dilation of $\Omega(n)$. Whether the minimum weight triangulation is an oriented $\bigo(1)$-spanner for convex point sets, we leave as an open problem.

For the Delaunay triangulation the situation is even worse. The Delaunay triangulation has the $\alpha$-diamond property~\cite{DrysdaleMS01} and is the basis for many undirected plane spanner constructions~\cite{Bose.2013}. However, for $n \geq 4$ its oriented dilation can be arbitrary large, no matter how we orient its edges, even for convex point sets. 

\begin{figure}
\centering%
\begin{tikzpicture}[every edge/.style={graph edge, semithick}, scale=.8]
\node [dot,label={-90:{$p_3$}}]  (3) at (0,0) {};
\node [dot,label={-5:{$p_2$}}] (2) at ($(3)+(.7,0.1)$) {};
\node [dot,label={-175:{$p_4$}}] (4) at ($(3)+(-.7,.1)$) {};

\tkzCircumCenter(2,3,4)%
\tkzGetPoint{O}%
\tkzDrawCircle[dashed,opacity=.5](O,2)%

\node [dot,label={-5:{$p_1$}}] (1) at ($(3)+(0,4.8)$) {};

\path
(1) edge (2)
(1) edge (3)
(1) edge (4)
(2) edge (3)
(3) edge (4)
;
\end{tikzpicture}%
\caption{Delaunay triangulation for a convex point set. It is also a minimum dilation triangulation.}%
\label{fig:delaunay-arbitrary-worse}
\end{figure}

\Cref{fig:delaunay-arbitrary-worse} shows the Delaunay triangulation $T=(P,E)$ of a convex point set of size~$4$. The shortest closed walk $C_G(p_2,p_4)$ contains $p_1$, for any orientation of $T$. We can place $p_2, p_3, p_4$ arbitrarily close to each other without decreasing the radius of the circle through them. By placing $p_1$ in the circle and sufficiently far away from $p_2, p_3$ and $p_4$, the oriented dilation
$\odil(p_2,p_4)\geq\frac{|C_G(p_2,p_4)|}{|\Delta_{p_2p_3p_4}|}$ can be made arbitrary large. The example can be modified to include more points without changing $\odil(p_2,p_4)$.

Considering Delaunay and minimum weight triangulation, we conclude that the $\alpha$-diamond property is not sufficient to find  oriented $\bigo(1)$-spanners.

Finally, we consider orienting the undirected minimum dilation triangulation. However, essentially the same example as for the Delaunay triangulation shows that there are convex point sets for which any orientation has arbitrarily large oriented dilation. In \Cref{fig:delaunay-arbitrary-worse}, consider the case in which we have fixed all positions except for the $y$-coordinates of $p_2$ and $p_4$. Let $y_0$ be the $y$-coordinate of $p_3$, and let $y_0 + \varepsilon$ be the $y$-coordinate of $p_2$ and $p_4$. By decreasing $\varepsilon>0$, we can make sure that the minimum dilation triangulation chooses the same diagonal as the Delaunay triangulation, since the undirected dilation between $p_2$ and $p_4$ by the path through $p_3$ becomes arbitrarily close to~$1$. However, as in the Delaunay triangulation, the oriented dilation between $p_2$ and $p_4$ can be made arbitrarily large. 

Thus, the quest for a triangulation of small oriented dilation for non-convex point sets remains open. 


\section{Conclusion and Outlook}

Motivated by applications of geometric spanners, we introduced the concept of oriented geometric spanners.
We provided a wide range of extremal and algorithmic results for oriented spanners in one and two dimensions. 


Intriguingly, orienting the greedy triangulation yields a plane $\bigo(1)$-spanner for point sets in convex position, but not for general point sets. Furthermore, other natural triangulations like the Delaunay and the minimum weight triangulation do not lead to plane constant dilation spanners.

This raises the question of whether a plane constant dilation spanner exists and can be computed efficiently. If this is not possible, what is the lowest dilation that we can guarantee? Until now, even showing whether the greedy triangulation yields an $\bigo(n)$-spanner remains open.

As the concept of oriented spanners is newly introduced, it opens up many new avenues of research, for example:
\begin{itemize}
\item We know that the minimum one-page plane oriented spanner achieves a dilation of at most~$5$ (since this is the bound for the greedy algorithms) and that there are point sets where it has a dilation of~$2$. What is its worst-case dilation $2\leq t \leq 5$? Can we compute it faster?
\item We constructed a two-page plane oriented $2$-spanner for any one-dimensional point set. Is $2$ a tight upper bound on the dilation of a minimum two-page plane spanner? Is there an efficient algorithm to compute such a spanner?         
\item For two-dimensional point sets, the question of bounding minimum oriented dilation already arises without restricting to plane graphs. What is the worst-case dilation $2\sqrt{3}-2\leq t \leq 2$ of the minimum dilation oriented complete graph?
\item Given an undirected geometric graph, can we efficiently compute an orientation minimising the dilation? For which graph classes is this possible?
\item  While undirected dilation compares the shortest path from $p$ to $p'$ in $G$ with the edge between them in the complete graph $K_n$, oriented dilation compares the shortest closed walk through $p$ to $p'$ in $G$ to the shortest triangle in $K_n$. An analogous measure in undirected graphs, \emph{cyclic dilation}, would compare the shortest cycle in $G$ to the shortest triangle in $K_n$. Can we efficiently compute sparse graphs, in particular triangulations, of low cyclic dilation? Another measure of interest would be \emph{detour dilation}, which compares the shortest path not using the edge $\{p,p'\}$ in $G$ (i.e.\ the shortest path in $G - \{p,p'\}$) with the triangle in $K_n$. Low detour dilation is a necessary condition for low oriented dilation, and actually at the core of our analysis of the greedy triangulation. Is there a triangulation with constant detour dilation? Can it be oriented to obtain constant oriented dilation?
\item In the applications mentioned, it is desirable to reduce the number of bi-directional edges but it may not be necessary to avoid them completely. This opens up a whole new set of questions on the trade-off between directed dilation and the number of bi-directional edges. 
For instance, given a parameter $t$, compute the directed $t$-spanner with as few bi-directed edges as possible (and no bound on the number of oriented edges). 
Or, given a plane graph with certain edges marked as \emph{one-way}, compute the orientation of these edges that minimises the directed dilation (while all other edges are bi-directional). For which families of graphs can this be done efficiently? For which is this problem NP-hard?
\end{itemize}

\bibliography{bib-arxiv} 

@PREAMBLE{"\newcommand{\SortNoop}[1]{}"}

@String { algorithmica = {Algorithmica} }

@String { and          = { and } }

@String { camb-u-p     = {Cambridge University Press} }

@String { cccg         = { Canad. Conf. Comput. Geom. (CCCG)} }

@String { cgta         = {Comput. Geom. Theory Appl.} }

@String { dam           = {Discret. Appl. Math.} }

@String { dcg          = {Discrete Comput. Geom.} }

@String { elsevier     = {Elsevier} }

@String { esa          = { Annu. European Sympos. Algorithms (ESA)} }

@String { ijcga        = {Internat. J. Comput. Geom. Appl.} }

@String { jan          = {January} }

@String { josis         = {J. Spatial Information Science} }

@String { jun          = {June} }

@String { lncs         = {Lecture Notes in Computer Science} }

@String { proc         = {Proc. } }

@String { siam         = {SIAM} }

@String { sicomp       = {SIAM J. Comput.} }

@String { simon        = {Simon \& Schuster} }

@String { soda         = { Annu. ACM-SIAM Sympos. Discrete Algorithms (SODA)} }

@String { springer     = {Springer-Verlag} }

@string{wads = " Internat. Sympos. Algorithms and Data Struct. "}

@inproceedings{brandt-mdt-2014,
	author       = {Al{\'{e}}x F. Brandt and
	Miguel M. Gaiowski and
	Cid C. de Souza and
	Pedro J. de Rezende},
	title        = {Minimum Dilation Triangulation: Reaching Optimality Efficiently},
	booktitle    = proc # "26th" # cccg,
	year         = {2014},
}

@inproceedings{burkhart2004does,
	title={Does topology control reduce interference?},
	author={Burkhart, Martin and Von Rickenbach, Pascal and Wattenhofer, Rogert and Zollinger, Aaron},
	booktitle={Proc.\ 5th ACM Internat. Sympos. Mobile Ad Hoc Networking and Computing},
	pages={9--19},
	year={2004},
	doi={10.1145/989459.989462   }
}

@article{schindelhauer2007geometric,
	title={Geometric spanners with applications in wireless networks},
	author={Schindelhauer, Christian and Volbert, Klaus and Ziegler, Martin},
	journal= cgta,
	volume={36},
	number={3},
	pages={197--214},
	year={2007},
	doi ={10.1016/j.  comgeo.2006.02.001 },
	publisher=elsevier
}

@inproceedings{BoseLS07,
	author       = {Bose, Prosenjit and
	Lee, Aaron and
	Smid, Michiel},
	title        = {On Generalized Diamond Spanners},
	editor= {Dehne, Frank
	and Sack, J{\"o}rg-R{\"u}diger
	and Zeh, Norbert},
	booktitle    = proc # "10th" # wads,
	series       = lncs,
	volume       = {4619},
	pages        = {325--336},
	publisher    = springer,
	year         = {2007},
	doi          = {10.1007/978-3-540-73951-7_29},
}

@article{aronov2011connect,
	title={Connect the dot: Computing feed-links for network extension},
	author={Aronov, Boris and Buchin, Kevin and Buchin, Maike and Jansen, Bart and De Jong, Tom and Kreveld, Marc van and L\"offler, Maarten and Luo, Jun and Speckmann, Bettina and Silveira, Rodrigo I.},
	journal=josis,
	volume={3},
	pages={3--31},
	year={2011},
	doi={10.5311/JOSIS.2011.3.47 }
}

@article{dobson2014sparse,
	title = {Sparse roadmap spanners for asymptotically near-optimal motion planning},
	author = {Dobson, Andrew and Bekris, Kostas},
	journal = {Internat.\ J.\ Robotics Research},
	number = {1},
	pages = {18--47},
	volume = {33},
	year = {2014},
	doi = {10.1177/0278364913498 }
}

@book{Narasimhan.2007,
	author    = {Giri Narasimhan and
	Michiel Smid},
	title     = {Geometric Spanner Networks},
	publisher = camb-u-p,
	year      = {2007},
	doi = {10.1017/CBO9780511546884  },
}

@article{Bose.2013,
	author = {Bose, Prosenjit and Smid, Michiel},
	year = {2013},
	title = {On Plane Geometric Spanners: A Survey and Open Problems},
	pages = {818--830},
	volume = {46},
	number = {7},
	journal = cgta,
	doi = {10.1016/j.   comgeo.2013.04.002 }
}

@article{Giannopoulos.2010,
	author    = {Panos Giannopoulos and
	Rolf Klein and
	Christian Knauer and
	Martin Kutz and
	D{\'{a}}niel Marx},
	title     = {Computing Geometric Minimum-Dilation Graphs is {NP}-Hard},
	journal   = ijcga,
	volume    = {20},
	number    = {2},
	pages     = {147--173},
	year      = {2010},
	doi = {10.1142/S0218195910003244      }
}

@inproceedings{Cowen.1999,
	author = {Cowen, Lenore J. and Wagner, Christopher G.},
	title = {Compact Roundtrip Routing for Digraphs},
	editor       = {Robert Endre Tarjan and
	Tandy J. Warnow},
	pages = {885--886},
	publisher = {{SIAM}},
	booktitle = proc # "10th" # soda, 
	year = {1999},
	doi = {10.5555/314500.315068}, 
}

@article{Cowen.2004,
	author = {Cowen, Lenore J. and Wagner, Christopher G.},
	year = {2004},
	title = {Compact roundtrip routing in directed networks},
	pages = {79--95},
	volume = {50},
	number = {1},
	journal = {J.\ Algorithms},
	doi = {10.1016/j          .jalgor.2003.08.001},
}

@article{KeilGutwin92,
	author       = {J. Mark Keil and
	Carl A. Gutwin},
	title        = {Classes of Graphs Which Approximate the Complete Euclidean Graph},
	journal      = dcg,
	volume       = {7},
	pages        = {13--28},
	year         = {1992},
	doi = {10.1007/BF02187821     }
}

@inproceedings{DasJ89,
	author       = {Gautam Das and
	Deborah Joseph},
	title        = {Which Triangulations Approximate the Complete Graph?},
	editor={Djidjev, Hristo},
	doi ={10.1007/3-540-51859-2_15     },
	booktitle    = {Optimal Algorithms},
	series       = lncs,
	volume       = {401},
	pages        = {168--192},
	publisher    = springer,
	year         = {1989}
}

@article{DrysdaleMS01,
	author       = {Drysdale, Robert L. (Scot)  and
	McElfresh,Scott A.  and
	Snoeyink, Jack },
	title        = {On exclusion regions for optimal triangulations},
	journal      = dam,
	volume       = {109},
	number       = {1-2},
	pages        = {49--65},
	year         = {2001},
	doi =  {10.1016/S0166-218X(00)00236-5      }
}

@incollection{Eppstein00,
	author       = {David Eppstein},
	title        = {Spanning Trees and Spanners},
	booktitle    = {Handbook of Computational Geometry},
	pages        = {425--461},
	publisher = {North-Holland},
	year         = {2000},
	doi= {h10.1016/B978-044482537-7/50010-3},
}

@article{LevcopoulosL92,
	author       = {Christos Levcopoulos and
	Andrzej Lingas},
	title        = {Fast Algorithms for Greedy Triangulation},
	journal      = {{BIT}},
	volume       = {32},
	number       = {2},
	pages        = {280--296},
	year         = {1992},
	doi = {10.1007/BF01994882      },
}

@article{ChungLR87.bookembeddings, 
	author = {Chung, Fan R. K. and Leighton, Frank Thomson and Rosenberg, Arnold L.},
	title = {Embedding Graphs in Books: A Layout Problem with Applications to VLSI Design},
	year = {1987},
	issue_date = {Jan 1987},
	publisher = {Society for Industrial and Applied Mathematics},
	volume = {8},
	number = {1},
	doi = {10.1137/0608002                },
	journal = {SIAM J. Algebraic Discrete Methods},
	pages = {33–58},
}

@article{Yannakakis89.bookembeddings,
	author       = {Mihalis Yannakakis},
	title        = {Embedding Planar Graphs in Four Pages},
	journal      = {J. Comput. Syst. Sci.},
	volume       = {38},
	number       = {1},
	pages        = {36--67},
	year         = {1989},
	doi          = {10.1016/0022-0000(89)90032-9              },
}

@article{LEVCOPOULOS1999197,
	title = {The greedy triangulation can be computed from the {D}elaunay triangulation in linear time},
	journal = cgta,
	volume = {14},
	number = {4},
	pages = {197-220},
	year = {1999},
	issn = {0925-7721},
	doi = {10.1016/S0925-7721(99)00037-1       },
	author = {Christos Levcopoulos and Drago Krznaric},
}

@article{BekosGR16.bookembeddings,
	author       = {Michael A. Bekos and
	Martin Gronemann and
	Chrysanthi N. Raftopoulou},
	title        = {Two-Page Book Embeddings of 4-Planar Graphs},
	journal      = algorithmica,
	volume       = {75},
	number       = {1},
	pages        = {158--185},
	year         = {2016},
	doi          = {10.1007/s00453-015-0016-8            },
}

@article{DujmovicW04.bookembeddings,
	author       = {Vida Dujmovic and
	David R. Wood},
	title        = {On Linear Layouts of Graphs},
	journal      = {Discret. Math. Theor. Comput. Sci.},
	volume       = {6},
	pages        = {339--358},
	year         = {2004},
	doi          = {10.46298/dmtcs.317},
}

@article{AkitayaBB22,
	author       = {Hugo A. Akitaya and
	Ahmad Biniaz and
	Prosenjit Bose},
	title        = {On the spanning and routing ratios of the directed {\(\Theta\)}\({}_{\mbox{6}}\)-graph},
	journal      = {Comput. Geom.},
	volume       = {105-106},
	pages        = {101881},
	year         = {2022},
	doi          = {10.1016/j. comgeo.2022.101881},
}

@inproceedings{BuchinGKPRRW23,
	author       = {Kevin Buchin and
	Joachim Gudmundsson and
	Antonia Kalb and
	Aleksandr Popov and
	Carolin Rehs and
	Andr{\'{e}} van Renssen and
	Sampson Wong},
	editor       = {Inge Li G{\o}rtz and
	Martin Farach{-}Colton and
	Simon J. Puglisi and
	Grzegorz Herman},
	title        = {Oriented Spanners},
	booktitle    = {31st Annual European Symposium on Algorithms, {ESA} 2023, September
	4-6, 2023, Amsterdam, The Netherlands},
	series       = {LIPIcs},
	volume       = {274},
	pages        = {26:1--26:16},
	publisher    = {Schloss Dagstuhl - Leibniz-Zentrum f{\"{u}}r Informatik},
	year         = {2023},
	doi          = {10.4230/LIPIcs.ESA.2023.26},
}

@article{ItaiPS82.hamilton,
	author       = {Alon Itai and
	Christos H. Papadimitriou and
	Jayme Luiz Szwarcfiter},
	title        = {Hamilton Paths in Grid Graphs},
	journal      = sicomp,
	volume       = {11},
	number       = {4},
	pages        = {676--686},
	year         = {1982}
}





\end{document}